\documentclass[lettersize,journal]{IEEEtran}
\usepackage{amssymb}
\usepackage{amsthm}
\usepackage{amsmath}
\usepackage{amsfonts}
\usepackage{algorithmic}
\usepackage{algorithm}
\usepackage{array}
\usepackage{booktabs}
\usepackage{cite}
\usepackage{color}
\usepackage{cleveref}
\usepackage{diagbox}
\usepackage{epsfig}
\usepackage{epstopdf}
\usepackage{graphicx}
\usepackage{multirow}
\usepackage{multicol}
\usepackage{makecell}
\usepackage{longtable}
\usepackage{subfigure}
\usepackage{textcomp}
\usepackage{stfloats}
\usepackage{url}
\usepackage{verbatim}
\usepackage{pifont}

\usepackage[T1]{fontenc}

\allowdisplaybreaks[2] 

\renewcommand\arraystretch{1.5} 
\newcommand{\PreserveBackslash}[1]{\let\temp=\\#1\let\\=\temp}
\newcolumntype{C}[1]{>{\PreserveBackslash\centering}p{#1}}
\newcolumntype{R}[1]{>{\PreserveBackslash\raggedleft}p{#1}}
\newcolumntype{L}[1]{>{\PreserveBackslash\raggedright}p{#1}}

\newtheorem{lemma}{Lemma}[section]

\newtheorem{theorem}{Theorem}[section]
\newtheorem{remark}{Remark}[section]

\newtheorem{assumption}{Assumption}[section]

\newcommand{\cmark}{\ding{51}}  
\newcommand{\xmark}{\ding{55}}  

\begin{document}

\title{FedFG: Privacy-Preserving and Robust Federated Learning via Flow-Matching Generation}

\author{~Ruiyang~Wang,~Rong~Pan,~and~Zhengan~Yao
\thanks{Ruiyang~Wang is with the School of Mathematics, Sun Yat-sen University, Guangzhou 510275, China. E-mail: wangry25@mail2.sysu.edu.cn.}
\thanks{Rong~Pan is with the School of Computer Science and Engineering, Sun Yat-sen University, Guangzhou 510006, China. E-mail: panr@sysu.edu.cn. (\textit{Corresponding author: Rong~Pan})}
\thanks{Zhengan~Yao is with the School of Mathematics, Sun Yat-sen University, Guangzhou 510275, China, and also with the Institute of Advanced Studies Hong Kong, Sun Yat-sen University, Hong Kong. E-mail: mcsyao@mail.sysu.edu.cn.}
\thanks{The source code is available at https://github.com/rywangcn/FedFG}
}

\markboth{IEEE TRANSACTIONS ON DEPENDABLE AND SECURE COMPUTING}
{Shell \MakeLowercase{\textit{et al.}}: A Sample Article Using IEEEtran.cls for IEEE Journals}


\maketitle

\begin{abstract}
Federated learning (FL) enables distributed clients to collaboratively train a global model using local private data. Nevertheless, recent studies show that conventional FL algorithms still exhibit deficiencies in privacy protection, and the server lacks a reliable and stable aggregation rule for updating the global model. This situation creates opportunities for adversaries: on the one hand, they may eavesdrop on uploaded gradients or model parameters, potentially leaking benign clients' private data; on the other hand, they may compromise clients to launch poisoning attacks that corrupt the global model. To balance accuracy and security, we propose FedFG, a robust FL framework based on flow-matching generation that simultaneously preserves client privacy and resists sophisticated poisoning attacks. On the client side, each local network is decoupled into a private feature extractor and a public classifier. Each client is further equipped with a flow-matching generator that replaces the extractor when interacting with the server, thereby protecting private features while learning an approximation of the underlying data distribution. Complementing the client-side design, the server employs a client-update verification scheme and a novel robust aggregation mechanism driven by synthetic samples produced by the flow-matching generator. Experiments on MNIST, FMNIST, and CIFAR-10 demonstrate that, compared with prior work, our approach adapts to multiple attack strategies and achieves higher accuracy while maintaining strong privacy protection.
\end{abstract}

\begin{IEEEkeywords}
Federated learning, privacy preservation, poisoning attacks, robust aggregation, flow-matching generation.
\end{IEEEkeywords}

\section{INTRODUCTION}\label{sec1}

\IEEEPARstart{F}{ederated} learning (FL) \cite{bonawitz2017practical, mcmahan2017communication} has emerged as a key paradigm for collaborative model training across distributed clients, where raw data remain local and only model updates are exchanged. This design aligns with data protection regulations such as GDPR \cite{regulation2016regulation} and HIPAA \cite{act1996health}, and has enabled deployments in security-critical domains, including healthcare, autonomous driving, and financial services \cite{rieke2020future, nguyen2021federated, dash2022federated, liu2024epffl, fan2025robust}. However, as FL is increasingly used to support high-stakes decision making, its security assumptions are being challenged by adversaries operating over open networks and in partially compromised client populations.

A central challenge is that FL must simultaneously ensure data confidentiality and model integrity.
On the one hand, adversaries can reconstruct high-fidelity private images or text from observed client gradients via privacy attacks, leading to sensitive information leakage \cite{zhu2019deep, zhao2020idlg, geiping2020inverting}; on the other hand, adversaries can control malicious clients to launch advanced poisoning attacks \cite{karimireddy2021learning, xie2020fall, fang2020local, cao2022mpaf, xie2025model}, thereby degrading global model performance and undermining system security. These two risks are tightly coupled: mechanisms that expose more information for server-side inspection may improve robustness but worsen privacy, whereas mechanisms that conceal updates can strengthen privacy but reduce the server's ability to detect malicious contributions.

Existing defenses often address only one side of this coupled problem or rely on assumptions that break down in practical non-IID settings. Cryptographic protocols can prevent plaintext exposure \cite{hu2024maskcrypt, li2025fedphe, dong2023privacy}, but their overhead can be prohibitive; more fundamentally, encrypted updates limit the server's ability to verify client behavior \cite{so2020byzantine}. Differential privacy (DP) provides formal protection by perturbing updates \cite{wei2020federated, xue2023differentially, wei2023personalized}, yet stringent privacy budgets can substantially degrade accuracy. Byzantine-robust aggregators such as Krum \cite{blanchard2017machine} and the coordinate-wise median \cite{yin2018byzantine} rely on geometric clustering of benign updates, an assumption often violated under heterogeneous client distributions, and can be bypassed by sophisticated attacks \cite{xie2020fall}. Validation-based schemes can improve robustness by screening updates \cite{cao2020fltrust, zhao2022detecting}, but typically require auxiliary clean data or trusted validation signals, which may conflict with the ``no extra data'' premise and introduce additional privacy concerns.

Motivated by this tension, we explore a different design axis: using synthetic data as a privacy--robustness bridge. Unlike raw updates, synthetic samples can be kept in plaintext to enable server-side verification, while avoiding a one-to-one correspondence with original instances, thereby reducing direct leakage risks. This perspective suggests a unified architecture in which clients provide behavioral evidence via generated samples rather than revealing sensitive feature representations or full parameter updates. As a result, the server can assess update quality and detect poisoning attempts without directly inspecting private client updates.

Building on this insight, we propose FedFG, a robust federated learning framework based on flow-matching generation that jointly preserves privacy and resists sophisticated poisoning attacks. On the client side, each model is decoupled into a private extractor and a public classifier, and a flow-matching generator is trained to replace the extractor when interacting with the server, protecting private features while learning an approximation to the local data distribution. On the server side, we design a client update verification scheme and a robust aggregation mechanism driven by synthetic samples produced by these generators, enabling anomaly detection and accuracy-aware reweighting under diverse attack strategies. The main contributions of this paper are summarized as follows:

\begin{itemize}
\item We propose FedFG, a federated learning framework that leverages flow-matching generators to unify client-side privacy protection and server-side robustness within a single design.
\item We design a server-side verification and aggregation scheme that uses synthetic samples to compute outlier and accuracy scores, detect malicious clients, and perform accuracy-aware reweighting over the remaining clients.
\item We establish a first-order stationarity guarantee for FedFG under nonconvex objectives, providing a rigorous theoretical foundation for federated learning framework that simultaneously address privacy preservation and model poisoning defense. 
\item To the best of our knowledge, this work is the first to adopt flow-matching as the generative backbone for a unified federated learning architecture capable of mitigating both privacy leakage and poisoning attacks. Extensive experiments and thorough evaluations demonstrate the effectiveness and superiority of our approach. 
\end{itemize}

The remainder of this paper is organized as follows. In~\cref{sec2}, we review related work and introduce the flow-matching model. In~\cref{sec3}, we present the proposed framework, which uses flow-matching generators as a cornerstone for client-side privacy protection and server-side robustness, and analyze its convergence in \cref{convergence}. In~\cref{sec4}, we detail the experimental settings and report results from comparative evaluations. Finally, in~\cref{sec5}, we conclude and discuss potential directions for future work.

\section{RELATED WORK}\label{sec2}
In this section, we review prior work on privacy-preserving and robust federated learning, covering major security threats and corresponding defense strategies. We also discuss generative federated learning and flow-matching models, the latter of which provides the foundation for our method.

\subsection{Privacy Leakage Protection}
Despite keeping raw data local, FL remains vulnerable to gradient inversion attacks \cite{zhu2019deep,zhao2020idlg,geiping2020inverting} that reconstruct training samples from shared updates. Consequently, explicit protection mechanisms are required, typically involving trade-offs among accuracy, robustness, and communication overhead. To mitigate privacy leakage caused by directly uploading local updates, existing defenses are commonly categorized into three groups: cryptography-based methods, differential privacy-based methods, and model-masking methods.

\subsubsection{Cryptography-based protection}
Cryptography-based methods enable aggregation without exposing plaintext updates, blocking gradient inversion at the protocol level. Recent studies have explored improvements along the efficiency--security trade-off. Hu and Li \cite{hu2024maskcrypt} propose a gradient-guided mask-selection mechanism that encrypts only a small portion of updates while maintaining protection. FedPHE \cite{li2025fedphe} builds contribution-aware secure aggregation based on homomorphic encryption with sparsification and obfuscation techniques. Dong et al. \cite{dong2023privacy} leverage secure multi-party computation to implement Byzantine-robust aggregation while preserving parameter privacy. Nevertheless, cryptography-based protection is not foolproof. Lam et al. \cite{lam2021gradient} show that even with secure aggregation, the server may still infer individual updates under certain system conditions.

\subsubsection{Differential privacy-based protection}
Differential privacy (DP) provides a quantifiable privacy guarantee by injecting noise into client updates. A key advantage of DP is computational efficiency; however, tighter privacy budgets typically reduce model utility. Xue et al. \cite{xue2023differentially} propose an adaptive noise mechanism that allocates noise according to parameter sensitivity, achieving higher accuracy under the same privacy budget. Wei et al. \cite{wei2023personalized} develop a privacy-budget allocation scheme for personalized FL based on composition theory and derive convergence bounds. Yang et al. \cite{yang2023dynamic} propose dynamic personalization with adaptive constraints to reduce the adverse impact of noise on critical parameters.

\subsubsection{Model masking methods}
Model masking methods avoid uploading complete updates by concealing a subset of network parameters. Chen et al. \cite{chen2023fedrsm} propose FedRSM, which applies representational similarity analysis to estimate layer importance and selectively hide a subset of layers. Building on this, DPR-PPFL \cite{chen2024data} extends the approach to IoT settings using layer-wise and model-based similarity analysis to resist both inversion and poisoning attacks. Another strategy is to split the network and conceal feature information. Wu et al. \cite{ijcai2022-324} adopt split learning by partitioning the local model into a private extractor and a public classifier, and employ a cGAN generator as a proxy for the extractor during server interaction, thereby mitigating privacy attacks while maintaining accuracy. Moreover, Ma et al. \cite{ma2024ppidsg} map images into a target domain via block encryption and train a GAN to model the distribution in that domain, uploading only the generator parameters to support collaboration. Overall, masking-based methods are typically more computation-friendly and can enable the server to perform limited quality assessment using available plaintext information; however, sustaining stable generalization under non-IID data distributions requires careful mechanism design.

\subsection{Model Poisoning Defense}
In model poisoning attacks, adversaries manipulate local updates to degrade global model performance. Representative strategies include sign flipping \cite{karimireddy2021learning}, inner product manipulation \cite{xie2020fall}, and optimization-based attacks \cite{fang2020local,cao2022mpaf,xie2025model} that maximize deviation or steer the model toward attacker-chosen low-accuracy states over multiple rounds. To defend against model poisoning in federated learning, existing studies can be broadly grouped into three lines: statistically robust aggregation, similarity-based detection, and validation-based filtering.

\subsubsection{Statistical robust aggregation}
Statistical approaches rely on robust statistics to suppress the impact of outlier updates. Representative examples include Krum \cite{blanchard2017machine}, which selects the update closest to the majority; coordinate-wise Median and TrimmedMean \cite{yin2018byzantine}, which compute per-coordinate statistics after removing extremes; and geometric-median-based aggregators \cite{pillutla2022robust}, which improve the breakdown point but may suffer irreducible error under highly heterogeneous data. While simple to implement, these methods can be sensitive to non-IID heterogeneity and may degrade when benign clients naturally exhibit diverse update patterns.

\subsubsection{Similarity-based detection}
Similarity-based methods exploit consistency patterns that coordinated attackers often exhibit. FoolsGold \cite{fang2020local} addresses Sybil poisoning by adaptively down-weighting clients that exhibit high gradient similarity across rounds, under the assumption of colluding attackers. DPR-PPFL \cite{chen2024data} moves similarity assessment from parameter space to representation space, applying representational similarity analysis to measure consistency and filter anomalous models in IoT scenarios.

\subsubsection{Validation-based filtering}
Validation-based methods authenticate updates using trusted or synthesized validation signals. FLTrust \cite{cao2020fltrust} assumes the server holds a small clean dataset and assigns trust scores based on directional agreement. GANDefense \cite{zhao2022detecting} generates audit data with a server-side GAN and removes malicious updates using an audit-accuracy threshold. GAN-Filter \cite{zafar2025robust} avoids external datasets by letting the global model guide synthetic sample generation for adaptive filtering. However, these approaches often require auxiliary datasets or depend on an unpoisoned global model to guide the generator, which may conflict with strict FL privacy assumptions or fail under early-round attacks.

\subsection{Generative Federated Learning and Flow-Matching Models}

Existing generative federated learning (GFL) methods often address privacy or robustness in isolation rather than jointly. Client-side generation approaches \cite{ijcai2022-324,ma2024ppidsg} use generators to replace privacy-sensitive representations, but typically lack robust server-side verification mechanisms. Server-side generation approaches \cite{zhao2022detecting,zafar2025robust} produce audit samples for scoring client updates, but may depend on auxiliary clean validation data and can be unstable under non-IID heterogeneity. These limitations motivate a unified architecture in which the generative component supports both privacy protection and robust aggregation. 

Recent flow-matching methods provide a promising foundation by enabling simulation-free training of continuous normalizing flows (CNFs) \cite{lipman2023flow,tong2024improving}. A CNF defines a deterministic probability-flow ODE
\begin{equation}
\frac{d x_{t}}{dt}=v_\theta(t,x_t),
\end{equation}
which transports samples from a source distribution to the data distribution. Flow matching trains $v_{\theta}$ via regression to a target vector field:
\begin{equation}
\mathcal{L}_{\mathrm{FM}}(\theta)=\mathbb{E}_{t,x\sim p_t}\big\|v_\theta(t,x)-u_t(x)\big\|_2^2 .
\end{equation}
Importantly, optimal-transport displacement interpolation can yield straighter conditional trajectories, enabling faster and more stable generation with fewer neural function evaluations \cite{lipman2023flow,tong2024improving,liu2022flow}. These properties make flow matching well suited to FL: it supports stable sample synthesis under non-IID settings and produces plaintext synthetic data that the server can use for update verification and robust aggregation. Accordingly, we adopt flow matching as the generative backbone and propose FedFG, which tightly couples client-side privacy protection with server-side robustness via flow-matching generated synthetic samples.

\begin{table}[htbp]
\centering
\caption{Key Notations for the Proposed Method}
\label{tab:notations}
\renewcommand{\arraystretch}{1.3} 
\begin{tabular}{cp{0.75\columnwidth}}
\hline\hline
Symbol & Meaning \\
\hline
$N$ & Number of clients \\
\hline
$K$ & Number of classes \\
\hline
$R$ & Number of communication rounds \\
\hline
$\mathcal{B}, \mathcal{M}$ & Set of benign clients and set of malicious clients \\
\hline
$\mathcal{D}_i$ & Local dataset of client $i$ \\
\hline
$E_i$ & Private feature extractor of client $i$ \\
\hline
$C_i, FG_i$ & Public classifier and flow-matching generator of client $i$ \\
\hline
$C_g, FG_g$ & Global classifier and global flow-matching generator \\
\hline
$\theta_{(\cdot)}$ & Model parameters of the corresponding network module \\
\hline
$z \sim p(z)$ & Noise sampled from the prior distribution \\
\hline
$t \in [0,1]$ & Time variable in flow-matching \\
\hline
$h, \tilde{h}$ & Real feature representation and synthetic feature probe \\ 
\hline
$h_0, h_1, h_t$ & Initial noise feature, target real feature, and intermediate feature on the flow path \\
\hline
$v_{\theta_{FG_i}}(\cdot)$ & Conditional vector field of the generator \\
\hline
$u_t(\cdot)$ & Conditional target flow for flow-matching \\
\hline
$\mathcal{L}_{\mathrm{cls}}, \mathcal{L}_{\mathrm{FM}}$ & Classification loss and flow-matching loss \\
\hline
$w_i^{(r)}$ & Aggregation weight of client $i$ at round $r$. \\
\hline
$s_i^{r}$ & Accuracy score of client $i$ at round $r$ \\ 
\hline
$\alpha_i^{r}$ & Relative accuracy score of client $i$ at round $r$ \\
\hline
$\bar{\alpha}_i^r$ & Renormalized accuracy weight for benign client $i$ \\
\hline
$p_i$ & Predictive probability vector of client $i$ \\
\hline
$d_{\mathrm{H}}$ & Hellinger distance between client classifiers \\
\hline
$o_i^{r}$ & Outlier score of client $i$ at round $r$ \\
\hline
$\tau^{r}$ & Adaptive threshold computed via the Hampel rule \\ 
\hline
$\gamma$ & Tunable cutoff parameter for the Hampel rule \\
\hline
$\kappa$ & Accuracy threshold for filtering \\
\hline\hline
\end{tabular}
\end{table}


\section{PROPOSED METHOD}\label{sec3}

 \begin{figure*}[!htbp]
  \centering
    \includegraphics[width=0.9\textwidth]{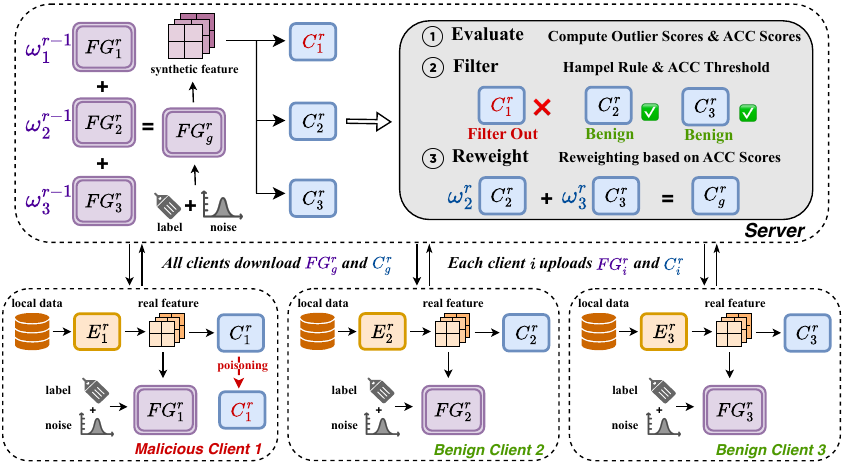}
    \caption{Overview of FedFG. On each client, the local model is decoupled into a private feature extractor and a public classifier, and is equipped with a flow-matching generator that replaces the extractor during communication with the server. The server collects model parameters from clients and performs robust aggregation with three modules: 1) evaluation of outlier and accuracy scores via synthetic samples generated by the flow-matching generator; 2) detection of malicious clients based on the Hampel rule and accuracy threshold; and 3) accuracy-aware reweighting aggregation over the remaining benign clients.}
    \label{Fig-overview}
 \end{figure*}

In this section, we present FedFG, a secure and privacy-preserving federated learning framework based on flow-matching generation. As illustrated in \figurename~\ref{Fig-overview}, the framework consists of two iterative procedures.

We consider a standard cross-silo FL system consisting of a central server and $N$ clients. Client $i$ holds a private dataset $\mathcal{D}_i=\{(x,y)\}$, where $x$ and $y\in\{1,\dots,K\}$ denote an input sample and its label, respectively.
FedFG decouples each client model into a private feature extractor $E_i: \mathcal{X}\rightarrow \mathbb{R}^d$ and a public classifier $C_i: \mathbb{R}^d\rightarrow \mathbb{R}^K$, and further equips each client with a flow-matching generator $FG_i: \mathcal{Z}\times\{1,\dots,K\}\rightarrow \mathbb{R}^d$ that approximates the feature distribution induced by the extractor, where $\mathcal{Z}$ is a simple noise distribution (e.g., a standard Gaussian). The server maintains a global generator $FG_g$ and a global classifier $C_g$. In each communication round $r$, clients download $(\theta_{FG_g}^{r-1},\theta_{C_g}^{r-1})$, perform local updates, and upload $(\theta_{FG_i}^{r},\theta_{C_i}^{r})$. The server then verifies client updates using synthetic feature probes produced by $FG_g$ and performs robust aggregation to obtain $(\theta_{FG_g}^{r},\theta_{C_g}^{r})$. Details are provided below for both the client-side and server-side procedures.

\subsection{Client Side: Privacy-Preserving Local Update}
\label{subsec:client}

A key privacy risk in FL stems from sharing parameters that directly interact with raw data, enabling gradient inversion and related reconstruction attacks. FedFG mitigates this risk by keeping the extractor $E_i$ strictly local. Only $(\theta_{FG_i},\theta_{C_i})$ are exchanged with the server. Since $FG_i$ operates in feature space and does not process raw inputs, and since $C_i$ maps features to logits, the information directly exposed about raw samples is substantially reduced compared with sharing an end-to-end model. Algorithm~\ref{alg:client_fedfg} describes the procedure in detail.

At round $r$, client $i$ initializes its classifier from the global classifier and then updates $\theta_{E_i}$ and $\theta_{C_i}$ on its private data by minimizing the supervised classification loss:
\begin{equation}
\mathcal{L}_{\mathrm{cls}}(\theta_{E_i},\theta_{C_i})
=\mathbb{E}_{(x,y)\sim \mathcal{D}_i}
\Big[\Omega\big(C_i(E_i(x;\theta_{E_i});\theta_{C_i}),\,y\big)\Big],
\label{eq:client-cls}
\end{equation}
where $\Omega(\cdot,\cdot)$ denotes the cross-entropy loss. The local update is:
\begin{equation}
(\theta_{E_i},\theta_{C_i})
\leftarrow (\theta_{E_i},\theta_{C_i})-\eta_1\nabla \mathcal{L}_{\mathrm{cls}},
\label{eq:client-ec-update}
\end{equation}
with learning rate $\eta_1$.

After updating the extractor, client $i$ trains a flow-matching generator $FG_i$ to approximate the conditional feature distribution of the extractor outputs given label $y$.
Let 
\begin{equation}\label{eq:h}
    h=E_i(x;\theta_{E_i})
\end{equation}
denote the real feature. Client $i$ optimizes $\theta_{FG_i}$ via a flow-matching objective:
\begin{equation}
\!\mathcal{L}_{\mathrm{FM}}\!\big(\!\theta_{FG_i};\!\theta_{E_i}\!\big)
\!=\!\mathbb{E}_{(x,y)\sim \mathcal{D}_i,z\sim p(z)}\!
\Big[\ell_{\mathrm{FM}}\!\big(\!FG_i,\!E_i(x),\!y,\!z\!\big)\!\Big],
\label{eq:client-fm-loss}
\end{equation}
where $\ell_{\mathrm{FM}}(\cdot)$ is the per-sample flow-matching loss implemented by the client-side generator module.
The generator update is:
\begin{equation}
\theta_{FG_i}\leftarrow \theta_{FG_i}-\eta_2\nabla \mathcal{L}_{\mathrm{FM}},
\label{eq:client-g-update}
\end{equation}
with learning rate $\eta_2$.

To instantiate a standard flow-matching model, we parameterize the generator as a conditional, time-dependent vector field $v_{\theta_{FG_i}}(h,t,y)$ and define generation through the ordinary differential equation
\begin{equation}
\frac{d h(t)}{dt} = v_{\theta_{FG_i}}(h(t),t,y),\quad t\in[0,1],
\label{eq:fm-ode}
\end{equation}
where the initial state $h(0)$ is sampled from a simple prior induced by $z\sim p(z)$ and a lightweight mapping from noise space to feature space. Sampling corresponds to integrating \eqref{eq:fm-ode} from $t=0$ to $t=1$ to obtain $h(1)$.

For training, we adopt an independent conditional flow-matching probability path between a noise sample $h_0$ and a real feature $h_1 =h$ as mentioned in Eq. (\ref{eq:h}):
\begin{align}
&h_t = t h_1 + (1-t)h_0 + \sigma \epsilon,  \label{eq:fm-path1} \\
&h_0\sim \mathcal{N}(0,I),\ \epsilon\sim \mathcal{N}(0,I),\ t\sim \mathrm{Unif}(0,1), \label{eq:fm-path2}
\end{align}
where $\sigma\ge 0$ is a scalar that controls the stochasticity of the path.
Under this construction, the conditional target flow is
\begin{equation}
u_t(h_1\mid h_0)=h_1-h_0,
\label{eq:fm-target-flow}
\end{equation}
and the per-sample flow-matching loss can be written as the mean-squared error between the predicted and target flows:
\begin{equation}
\ell_{\mathrm{FM}}\!\big(FG_i,\!h_1,\!y,\!z\big)\!=\!\mathbb{E}_{t}\!\Big[\big\|v_{\theta_{FG_i}}(h_t,\!t,\!y)-u_t(h_1\!\mid\! h_0)\big\|_2^2\Big].
\label{eq:fm-per-sample}
\end{equation}
The condition $y$ is injected into $v_{\theta_{FG_i}}$ via a learnable label embedding, and time $t$ is provided through a suitable time embedding, enabling the model to learn class-specific transport dynamics in the extractor feature space. During this stage, $\theta_{E_i}$ is fixed and only $\theta_{FG_i}$ is updated, so the generator learns the conditional feature distribution induced by the current extractor.

After the two-stage update, client $i$ uploads only the public modules $(\theta_{FG_i}^{r},\,\theta_{C_i}^{r})$, while keeping $\theta_{E_i}^{r}$ local. This design preserves privacy by preventing the server from accessing the module that directly transforms raw inputs into intermediate representations.

\begin{algorithm}[t]
\caption{Client-Side Procedure of FedFG}
\label{alg:client_fedfg}
\begin{algorithmic}[1]
\REQUIRE Local dataset $\mathcal{D}_i$; communication rounds $R$; local epochs $E_{\mathrm{loc}}$; flow epochs $E_{\mathrm{flow}}$; learning rates $\eta_1,\eta_2$; malicious client set $\mathcal{M}$.

\FOR{$r = 1,2,\ldots,R$}

    \STATE /* Step 1: Extractor and classifier training */
    \FOR{$e = 1,2,\ldots,E_{\mathrm{loc}}$}
        \FOR{each minibatch $(x,y)\sim \mathcal{D}_i$}
            \STATE Compute $\mathcal{L}_{\mathrm{cls}}$ using Eq.~\eqref{eq:client-cls}
            \STATE Update $(\theta_{E_i},\theta_{C_i})$ using Eq.~\eqref{eq:client-ec-update}
        \ENDFOR
    \ENDFOR

    \STATE /* Step 2: Flow-matching generator training */
    \FOR{$e = 1,2,\ldots,E_{\mathrm{flow}}$}
        \FOR{each minibatch $(x,y)\sim \mathcal{D}_i$}
            \STATE Extract real features $h_1 \leftarrow E_i(x;\theta_{E_i})$
            \STATE Sample $t, h_0, \epsilon$ using Eq.~\eqref{eq:fm-path2} and form $h_t$ using Eq.~\eqref{eq:fm-path1}
            \STATE Compute the FM loss $\ell_{\mathrm{FM}}$ using Eq.~\eqref{eq:fm-per-sample}
            \STATE Update $\theta_{FG_i}$ using Eq.~\eqref{eq:client-g-update}
        \ENDFOR
    \ENDFOR

    \STATE /* Step 3: Model poisoning by malicious clients */
    \IF{$i \in \mathcal{M}$}
        \STATE $(\theta_{FG_i}^{r},\,\theta_{C_i}^{r}) \leftarrow \mathcal{A}\big(\theta_{FG_i}^{r},\,\theta_{C_i}^{r};\,\theta_{FG_g}^{r-1},\,\theta_{C_g}^{r-1}\big)$
    \ENDIF

    \STATE /* Step 4: Upload public components */
    \STATE Send $(\theta_{FG_i}^{r},\,\theta_{C_i}^{r})$ to the server

    \STATE /* Step 5: Download global models */
    \STATE Receive $(\theta_{FG_g}^{r},\,\theta_{C_g}^{r})$ from the server
    \STATE Update public modules: $\theta_{C_i}^{r} \leftarrow \theta_{{C_g}}^{r}$, \ $\theta_{FG_i}^{r} \leftarrow \theta_{FG_g}^{r}$
\ENDFOR

\STATE \textbf{return} $\theta_{E_i}^{R},\, \theta_{C_i}^{R},\, \theta_{FG_i}^{R}$
\end{algorithmic}
\end{algorithm}

\subsection{Server Side: Update Verification and Robust Aggregation}
\label{subsec:server}
As illustrated in \figurename~\ref{Fig-overview}, malicious clients ($i\in\mathcal{M}$) may apply a poisoning function $\mathcal{A}(\cdot)$ to their public parameters before uploading (e.g., inner product manipulation), aiming to degrade the global model accuracy through corrupted updates. To counter such threats, FedFG verifies client updates using synthetic feature probes generated by the aggregated generator and filters outliers prior to aggregation. The key idea is to let $FG_g$ act as a judge that produces class-conditional synthetic features, which are then fed into each client classifier to obtain comparable predictive distributions for distinguishing benign updates from poisoned ones.

The server first forms preliminary global models by weighted averaging:
\begin{equation}
\theta_{FG_g}^{r} \leftarrow \sum_{i=1}^{N} w_i^{(r-1)}\,\theta_{FG_i}^{r},\quad
\theta_{C_g}^{r} \leftarrow \sum_{i=1}^{N} w_i^{(r-1)}\,\theta_{C_i}^{r},
\label{eq:server-preagg}
\end{equation}
where $w_i^{(r-1)}$ is the server's current aggregation weight for client $i$ (initialized proportionally to local data sizes and later updated using accuracy scores; see below). This preliminary $FG_g^{r}$ is then used to generate synthetic feature probes for verification.

The server samples labels $y\sim \mathrm{Unif}(\{1,\dots,K\})$ and noise $z\sim p(z)$, and generates synthetic features
\begin{equation}
\tilde{h}=FG_g^{r}(z,y;\theta_{FG_g}^{r}).
\label{eq:synth-feature}
\end{equation}
For each client classifier $C_i^{r}$, the server evaluates its correctness on these synthetic pairs:
\begin{equation}
s_i^{r}
=
\mathbb{E}_{y,z}
\Big[\mathbb{I}\big(\arg\max C_i^{r}(\tilde{h}) = y\big)\Big],
\label{eq:acc-score}
\end{equation}
where $\mathbb{I}(\cdot)$ is the indicator function and the expectation is approximated via Monte Carlo batches.
The scores are normalized to obtain the relative accuracy score:
\begin{equation}
\alpha_i^{r}=\frac{s_i^{r}}{\sum_{j=1}^{N} s_j^{r}+\varepsilon}.
\label{eq:relacc-score}
\end{equation}

Using the same synthetic probes $(\tilde{h},y)$, the server computes the predictive probability vector of client $i$:
\begin{equation}
p_i(\tilde{h})=\mathrm{softmax}\big(C_i^{r}(\tilde{h})\big)\in\Delta^{K-1}.
\label{eq:prob}
\end{equation}
For two clients $i$ and $j$, their predictive discrepancy is measured by the Hellinger distance:
\begin{equation}
d_{\mathrm{H}}\big(p_i,p_j\big)
=
\frac{1}{\sqrt{2}}
\left\|
\sqrt{p_i}-\sqrt{p_j}
\right\|_2.
\label{eq:hellinger}
\end{equation}
We define the outlier score of client $i$ as the average pairwise distance to all other clients:
\begin{equation}
o_i^{r}=\frac{1}{N-1}\sum_{j\neq i}\mathbb{E}_{y,z}\Big[d_{\mathrm{H}}\big(p_i(\tilde{h}),p_j(\tilde{h})\big)\Big].
\label{eq:outlier}
\end{equation}
Intuitively, poisoning often yields classifiers whose predictive behavior on shared probes deviates from that of benign clients, leading to larger $o_i^{r}$.

To robustly identify outliers without assuming a parametric distribution, the server applies a Hampel rule based on the median and MAD:
\begin{align}
m^{r} &= \mathrm{median}\{o_i^{r}\}_{i=1}^{N}, \\
\mathrm{MAD}^{r} &= \mathrm{median}\{|o_i^{r}-m^{r}|\}_{i=1}^{N}+\varepsilon, \\
\tau^{r} &= m^{r}+\gamma \cdot 1.4826\cdot \mathrm{MAD}^{r}.
\label{eq:hampel}
\end{align}
Here, $\gamma$ is a tunable cutoff in units of the MAD-based scale, and $1.4826$ rescales $\mathrm{MAD}$ to be $\sigma$-equivalent under Gaussian noise. Client $i$ is flagged as an outlier if $o_i^{r}>\tau^{r}$.

In addition to the outlier test, FedFG filters low-accuracy updates by thresholding the relative accuracy scores: client $i$ is filtered if $\alpha_i^{r}<\kappa$. The two filters are combined by set union.

Let $\mathcal{B}^{r}$ denote the set of remaining benign clients after filtering. The server re-normalizes the accuracy weights over $\mathcal{B}^{r}$:
\begin{equation}
\bar{\alpha}_i^{r}=\frac{\alpha_i^{r}}{\sum_{j\in \mathcal{B}^{r}}\alpha_j^{r}+\varepsilon},\qquad i\in \mathcal{B}^{r}.
\label{eq:renorm}
\end{equation}
Finally, the server aggregates only benign updates:
\begin{equation}
\theta_{FG_g}^{r} \leftarrow \sum_{i\in \mathcal{B}^{r}} \bar{\alpha}_i^{r}\,\theta_{FG_i}^{r},\quad
\theta_{C_g}^{r} \leftarrow \sum_{i\in \mathcal{B}^{r}} \bar{\alpha}_i^{r}\,\theta_{C_i}^{r}.
\label{eq:robust-agg}
\end{equation}
The resulting $(\theta_{FG_g}^{r},\theta_{C_g}^{r})$ are broadcast to all clients for the next round. Moreover, we set the next-round prior weights as $w_i^{(r)}=\alpha_i^{r}$, yielding an adaptive, history-aware aggregation scheme. The full procedure is presented in Algorithm~\ref{alg:server_fedfg}.

\begin{algorithm}[t]
\caption{Server-Side Procedure of FedFG}
\label{alg:server_fedfg}
\begin{algorithmic}[1]
\REQUIRE Number of clients $N$; communication rounds $R$; accuracy threshold $\kappa$; Hampel cutoff $\gamma$.

\FOR{$r = 1,2,\ldots,R$}
    \STATE /* Step 1: Collect client updates */
    \STATE Receive $\{(\theta_{C_i}^{r},\theta_{FG_i}^{r})\}_{i=1}^{N}$

    \STATE /* Step 2: Preliminary aggregation */
    \STATE Compute preliminary $(\theta_{FG_g}^{r},\theta_{C_g}^{r})$ using Eq.~\eqref{eq:server-preagg}

    \STATE /* Step 3: Synthetic feature probes */
    \STATE Sample $(z,y)$ and generate synthetic features $\tilde{h}$ using Eq.~\eqref{eq:synth-feature}

    \STATE /* Step 4: Evaluate client updates on probes */
    \FOR{$i = 1$ to $N$}
        \STATE Compute accuracy score $s_i^{r}$ using Eq.~\eqref{eq:acc-score}
        \STATE Compute relative accuracy score $\alpha_i^{r}$ using Eq.~\eqref{eq:relacc-score}
        \STATE Compute outlier score $o_i^{r}$ using Eqs.~\eqref{eq:hellinger} and~\eqref{eq:outlier}
    \ENDFOR

    \STATE /* Step 5: Malicious client detection */
    \STATE Compute threshold $\tau^{r}$ using Eq.~\eqref{eq:hampel}
    \STATE $\mathcal{B}^{r} \leftarrow \{\, i \mid (o_i^{r} < \tau^{r})\ \wedge\ (\alpha_i^{r} > \kappa)\,\}$

    \STATE /* Step 6: Accuracy-aware reweighting */
    \STATE For $i\in\mathcal{B}^{r}$, compute renormalized weights $\bar{\alpha}_i^{r}$ using Eq.~\eqref{eq:renorm}
    \STATE Aggregate $(\theta_{FG_g}^{r},\theta_{C_g}^{r})$ over $\mathcal{B}^{r}$ using Eq.~\eqref{eq:robust-agg}
    \STATE Update next-round prior weights: $w_i^{(r)} \leftarrow \alpha_i^{r}$

    \STATE /* Step 7: Broadcast global models */
    \STATE Send $(\theta_{FG_g}^{r},\,\theta_{C_g}^{r})$ to all clients
\ENDFOR

\STATE \textbf{return} $\theta_{C_g}^{R},\, \theta_{FG_g}^{R}$
\end{algorithmic}
\end{algorithm}


\section{Convergence Analysis}\label{convergence}

We establish a first-order stationarity guarantee for FedFG. Let $\mathcal{B}$ and $\mathcal{M}$ partition the $N$ clients into benign and malicious sets. We stack the server-maintained public parameters as $\theta^r := [\theta_{C_g}^r;\,\theta_{FG_g}^r] \in \mathbb{R}^p$ and define each benign client's local objective as $F_i(\theta) := \mathbb{E}_{\xi \sim \mathcal{D}_i}[\ell(\theta;\xi)]$ for $i \in \mathcal{B}$. The accuracy-weighted global objective at round $r$ is
\begin{equation}
  \widetilde{F}_r(\theta) := \sum_{i \in \mathcal{B}} \bar{\alpha}_i^r\, F_i(\theta),
  \label{eq:conv-Ft}
\end{equation}
where $\bar{\alpha}_i^r \ge 0$, $\sum_{i \in \mathcal{B}} \bar{\alpha}_i^r = 1$ are the renormalized accuracy-aware weights from Eq.~\eqref{eq:renorm}. Let $F_{*,r}(\theta) := \sum_{i \in \mathcal{B}} \bar{\alpha}_{i,*}^r\, F_i(\theta)$ denote the population-weight counterpart obtained with infinitely many synthetic probes, and let $\mathcal{E}^r := \{\mathcal{B}^r = \mathcal{B}\}$ be the event that the server correctly identifies all benign clients at round $r$.

We introduce the following assumptions required for the analysis. Assumptions~\ref{ass:smooth}--\ref{ass:local} are standard in federated optimization~\cite{Li2020On, karimireddy2020scaffold}, while Assumptions~\ref{ass:verif}--\ref{ass:weight} capture the specific properties of FedFG's verification mechanism.

\begin{assumption}[Smoothness and boundedness]
\label{ass:smooth}
For each $i \in \mathcal{B}$, $F_i$ is $L$-smooth, lower bounded by $F^{\inf} > -\infty$, and has bounded gradients: $\|\nabla F_i(\theta)\| \le G$ for all $\theta$.
\end{assumption}

\begin{assumption}[Stochastic gradients]
\label{ass:sgd}
Each benign client samples unbiased stochastic gradients with bounded variance: $\mathbb{E}[g_i(\theta;\xi)] = \nabla F_i(\theta)$ and $\mathbb{E}\|g_i(\theta;\xi) - \nabla F_i(\theta)\|^2 \le \sigma^2$.
\end{assumption}

\begin{assumption}[Bounded heterogeneity]
\label{ass:het}
Defining $F_{\mathcal{B}}(\theta) := \frac{1}{|\mathcal{B}|}\sum_{i \in \mathcal{B}} F_i(\theta)$, we have
\begin{equation}
  \frac{1}{|\mathcal{B}|}\sum_{i \in \mathcal{B}} \|\nabla F_i(\theta) - \nabla F_{\mathcal{B}}(\theta)\|^2 \le \zeta^2,
  \quad \forall\,\theta.
\end{equation}
\end{assumption}

\begin{assumption}[Local update]
\label{ass:local}
Let $Q \ge 1$ denote the number of local SGD steps per round, and let $\eta_r > 0$ be the local learning rate at round $r$. Conditioned on $(\theta^r, \mathcal{E}^r)$, the aggregated benign update $\Delta_{\mathcal{B}}^r := \theta^{r+1} - \theta^r$ satisfies, for constants $B, V \ge 0$ independent of $r$:
\begin{align}
  \big\|&\mathbb{E}[\Delta_{\mathcal{B}}^r \mid \theta^r, \mathcal{E}^r]
  + \eta_r Q\,\nabla \widetilde{F}_r(\theta^r)\big\|
  \le B\,\eta_r^2\, Q^2\,\zeta, \label{eq:conv-local-bias} \\
  &\mathbb{E}\big[\|\Delta_{\mathcal{B}}^r\|^2 \mid \theta^r, \mathcal{E}^r\big]
  \le V\,\eta_r^2\,(Q\sigma^2 + Q^2\zeta^2). \label{eq:conv-local-var}
\end{align}
\end{assumption}

\begin{assumption}[Robust verification]
\label{ass:verif}
For all $r$, $\mathbb{P}(\mathcal{E}^r) \ge 1 - \delta_f$ for some $\delta_f \in (0,1)$. On $\neg\mathcal{E}^r$, the possibly adversarial update satisfies $\mathbb{E}[\|\Delta_{\mathcal{M}}^r\|^2 \mid \neg\mathcal{E}^r] \le G_{\max}^2$.
\end{assumption}

\begin{assumption}[Weight estimation and objective stability]
\label{ass:weight}
Using $S$ i.i.d.\ synthetic probes, $\mathbb{E}[\|\bar{\alpha}^r - \bar{\alpha}_*^r\|_1^2] \le C_\alpha / S$ for some constant $C_\alpha > 0$.
Moreover, $\mathbb{E}[\sup_\theta |\widetilde{F}_{r+1}(\theta) - \widetilde{F}_r(\theta)|] \le \Delta_r$, with $\Delta_{0:R} := \sum_{r=0}^{R-1}\Delta_r < \infty$.
\end{assumption}

Based on these assumptions, we present two supporting lemmas.

\begin{lemma}[One-step descent]
\label{lem:descent}
Under Assumptions~\ref{ass:smooth} and~\ref{ass:local}, if $\eta_r Q \le \min\{1/(2L),\,1\}$, then on the event $\mathcal{E}^r$:
\begin{equation}
\begin{aligned}
  \mathbb{E}\big[\widetilde{F}_r(\theta^{r+1}) \mid \theta^r, \mathcal{E}^r\big]
  \le\;& \widetilde{F}_r(\theta^r)
  - c_1\,\eta_r Q\,\|\nabla \widetilde{F}_r(\theta^r)\|^2 \\
  &+ c_2\,\eta_r^2\,(Q\sigma^2 + Q^2\zeta^2),
\end{aligned}
\label{eq:conv-one-step}
\end{equation}
where $c_1 = 3/4$ and $c_2 = LV/2 + B^2$.
\end{lemma}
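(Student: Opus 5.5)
The plan is to use the standard smoothness-based descent argument applied to the round-$r$ surrogate $\widetilde{F}_r$, with the bias and second-moment controls of Assumption~\ref{ass:local} doing all the work.

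\textbf{Smoothness of the surrogate.} On $\mathcal{E}^r$ the weights $\bar{\alpha}_i^r$ are fixed by the conditioning, nonnegative, and sum to one. So $\widetilde{F}_r$ in \eqref{eq:conv-Ft} is a convex combination of the $L$-smooth functions $F_i$, $i\in\mathcal{B}$. By Assumption~\ref{ass:smooth} it is therefore itself $L$-smooth, since $\|\nabla \widetilde{F}_r(\theta)-\nabla\widetilde{F}_r(\theta')\|\le\sum_i\bar{\alpha}_i^r L\|\theta-\theta'\|$. Writing $\theta^{r+1}=\theta^r+\Delta_{\mathcal{B}}^r$, the quadratic upper bound gives
\begin{equation*}
\widetilde{F}_r(\theta^{r+1})\le \widetilde{F}_r(\theta^r)+\langle\nabla\widetilde{F}_r(\theta^r),\Delta_{\mathcal{B}}^r\rangle+\tfrac{L}{2}\|\Delta_{\mathcal{B}}^r\|^2 .
\end{equation*}
I will take the conditional expectation given $(\theta^r,\mathcal{E}^r)$. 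The gradient $\nabla\widetilde{F}_r(\theta^r)$ is measurable with respect to this conditioning, so it can be pulled out of the inner product.

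\textbf{Linear term.} Set $e^r:=\mathbb{E}[\Delta_{\mathcal{B}}^r\mid\theta^r,\mathcal{E}^r]+\eta_rQ\nabla\widetilde{F}_r(\theta^r)$. By \eqref{eq:conv-local-bias}, $\|e^r\|\le B\eta_r^2Q^2\zeta$. The linear term then equals $-\eta_rQ\|\nabla\widetilde{F}_r(\theta^r)\|^2+\langle\nabla\widetilde{F}_r(\theta^r),e^r\rangle$. I bound the cross term with Young's inequality using weight $\eta_rQ/4$:
\begin{equation*}
\langle\nabla\widetilde{F}_r,e^r\rangle\le \tfrac{\eta_rQ}{4}\|\nabla\widetilde{F}_r\|^2+\tfrac{1}{\eta_rQ}\|e^r\|^2\le \tfrac{\eta_rQ}{4}\|\nabla\widetilde{F}_r\|^2+B^2\eta_r^3Q^3\zeta^2 .
\end{equation*}
Since $\eta_rQ\le1$, the last term is at most $B^2\eta_r^2Q^2\zeta^2\le B^2\eta_r^2(Q\sigma^2+Q^2\zeta^2)$. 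Combining, the gradient coefficient becomes $-(1-\tfrac14)\eta_rQ=-\tfrac34\eta_rQ$, which yields $c_1=3/4$.

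\textbf{Quadratic term and assembly.} By \eqref{eq:conv-local-var}, $\tfrac{L}{2}\mathbb{E}[\|\Delta_{\mathcal{B}}^r\|^2\mid\theta^r,\mathcal{E}^r]\le \tfrac{LV}{2}\eta_r^2(Q\sigma^2+Q^2\zeta^2)$. Summing the two error contributions gives exactly $c_2=LV/2+B^2$, and hence \eqref{eq:conv-one-step}.

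\textbf{Main obstacle.} The only delicate point is the bias cross term. The bias bound scales as $\eta_r^2Q^2$, but the target error term is $\eta_r^2(\cdot)$ with no extra $1/(\eta_rQ)$ available. Choosing the Young weight proportional to $\eta_rQ$ produces an $\eta_r^3Q^3$ remainder, and the assumption $\eta_rQ\le1$ is exactly what lets it be absorbed into the $\eta_r^2Q^2\zeta^2$ term. The condition $\eta_rQ\le 1/(2L)$ is not needed for this inequality itself; I will note that it is kept so that the lemma can be chained with later steps, for example to make the gradient coefficient dominate after summing rounds.
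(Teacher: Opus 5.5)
Your proposal is correct and follows the paper's proof essentially step for step. The shared steps are the smoothness expansion under the conditional expectation, the bias vector (your $e^r$, the paper's $b_r$), Young's inequality giving $\frac{\eta_rQ}{4}\|\nabla\widetilde{F}_r(\theta^r)\|^2+\frac{1}{\eta_rQ}\|b_r\|^2$, absorption of the $\eta_r^3Q^3$ remainder via $\eta_rQ\le 1$, and the second-moment bound, which together yield $c_1=3/4$ and $c_2=LV/2+B^2$; your explicit check that $\widetilde{F}_r$ is $L$-smooth as a convex combination, and your remark that $\eta_rQ\le 1/(2L)$ is not actually used, are accurate points the paper leaves implicit.
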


\begin{proof}
By $L$-smoothness and the conditional expectation,
\begin{equation}
\begin{aligned}
  &\mathbb{E}\big[\widetilde{F}_r(\theta^{r+1}) \mid \theta^r, \mathcal{E}^r\big] \\
  &\le \widetilde{F}_r(\theta^r)
  + \big\langle \nabla \widetilde{F}_r(\theta^r),\,
    \mathbb{E}[\Delta_{\mathcal{B}}^r \mid \theta^r, \mathcal{E}^r]\big\rangle \\
  &\quad + \frac{L}{2}\,
    \mathbb{E}\big[\|\Delta_{\mathcal{B}}^r\|^2 \mid \theta^r, \mathcal{E}^r\big].
\end{aligned}
\label{eq:conv-smooth-expand}
\end{equation}
Define $b_r := \mathbb{E}[\Delta_{\mathcal{B}}^r \mid \theta^r, \mathcal{E}^r] + \eta_r Q\,\nabla \widetilde{F}_r(\theta^r)$, so that $\|b_r\| \le B\,\eta_r^2\,Q^2\,\zeta$ by~\eqref{eq:conv-local-bias}. The inner product term becomes
\begin{equation}
\begin{aligned}
  &\big\langle \nabla \widetilde{F}_r(\theta^r),\,
    \mathbb{E}[\Delta_{\mathcal{B}}^r \mid \theta^r, \mathcal{E}^r]\big\rangle \\
  &= -\eta_r Q\,\|\nabla \widetilde{F}_r(\theta^r)\|^2
  + \big\langle \nabla \widetilde{F}_r(\theta^r),\, b_r\big\rangle.
\end{aligned}
\end{equation}
Applying Young's inequality $\langle a, b\rangle \le \frac{\alpha}{2}\|a\|^2 + \frac{1}{2\alpha}\|b\|^2$ with $\alpha = \eta_r Q / 2$ gives
\begin{equation}
  \big\langle \nabla \widetilde{F}_r(\theta^r),\, b_r\big\rangle
  \le \frac{\eta_r Q}{4}\|\nabla \widetilde{F}_r(\theta^r)\|^2
  + \frac{1}{\eta_r Q}\|b_r\|^2.
\end{equation}
Since $\eta_r Q \le 1$, we have $\frac{1}{\eta_r Q}\|b_r\|^2 \le B^2\,\eta_r^3\,Q^3\,\zeta^2 \le B^2\,\eta_r^2\,Q^2\,\zeta^2$. Substituting back and combining with the second-moment bound~\eqref{eq:conv-local-var} in~\eqref{eq:conv-smooth-expand} yields~\eqref{eq:conv-one-step}.
\end{proof}

\begin{lemma}[Weight perturbation]
\label{lem:alpha_perturb}
Under Assumptions~\ref{ass:smooth} and~\ref{ass:weight}, for all $\theta$ and $r$:
\begin{equation}
  \mathbb{E}\big[\|\nabla \widetilde{F}_r(\theta) - \nabla F_{*,r}(\theta)\|^2\big]
  \le \frac{G^2\,C_\alpha}{S}.
  \label{eq:conv-alpha-grad}
\end{equation}
\end{lemma}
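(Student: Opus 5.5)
The approach is to write the gradient difference as a weighted sum over benign clients, bound it pointwise through the $\ell_1$ weight error, and then take expectations. For fixed $\theta$ and $r$, linearity of the gradient in the definitions of $\widetilde{F}_r$ and $F_{*,r}$ gives
\begin{equation*}
\nabla \widetilde{F}_r(\theta) - \nabla F_{*,r}(\theta) = \sum_{i\in\mathcal{B}} \big(\bar{\alpha}_i^r - \bar{\alpha}_{i,*}^r\big)\,\nabla F_i(\theta).
\end{equation*}
Here $\theta$ is deterministic, or at least independent of the probe randomness that drives $\bar{\alpha}^r$. The only random objects are therefore the weights, and each $\nabla F_i(\theta)$ is a fixed vector.

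Next I bound this difference pointwise. By the triangle inequality and the uniform gradient bound $\|\nabla F_i(\theta)\|\le G$ from Assumption~\ref{ass:smooth},
\begin{equation*}
\Big\|\sum_{i\in\mathcal{B}} (\bar{\alpha}_i^r - \bar{\alpha}_{i,*}^r)\nabla F_i(\theta)\Big\| \le G \sum_{i\in\mathcal{B}} |\bar{\alpha}_i^r - \bar{\alpha}_{i,*}^r| = G\,\|\bar{\alpha}^r - \bar{\alpha}_*^r\|_1 .
\end{equation*}
This deliberately uses the $\ell_1$ norm rather than Cauchy--Schwarz in $\ell_2$. The latter would introduce an unwanted factor of $|\mathcal{B}|$, whereas the $\ell_1$ bound matches the form of Assumption~\ref{ass:weight} exactly.

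Squaring this bound and taking expectations gives $\mathbb{E}\|\nabla \widetilde{F}_r(\theta) - \nabla F_{*,r}(\theta)\|^2 \le G^2\,\mathbb{E}\|\bar{\alpha}^r-\bar{\alpha}_*^r\|_1^2$. The first part of Assumption~\ref{ass:weight} bounds the right-hand side by $G^2 C_\alpha/S$, which is \eqref{eq:conv-alpha-grad}.

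The main point needing care is the interpretation of the quantities involved. The vectors $\bar{\alpha}^r$ and $\bar{\alpha}_*^r$ must both be indexed over the fixed benign set $\mathcal{B}$, which implicitly works on, or is defined relative to, $\mathcal{E}^r$. Also, if $\theta$ were random and correlated with the probes, the argument would still go through because the pointwise bound holds for every $\theta$ simultaneously before taking expectations. I would state this explicitly to close the proof. No smoothness beyond the gradient bound is needed.
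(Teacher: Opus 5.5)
Your proof is correct and matches the paper's own argument step for step. You expand the gradient difference as $\sum_{i\in\mathcal{B}}(\bar{\alpha}_i^r-\bar{\alpha}_{i,*}^r)\nabla F_i(\theta)$, bound its norm by $G\|\bar{\alpha}^r-\bar{\alpha}_*^r\|_1$ using the triangle inequality and $\|\nabla F_i\|\le G$, then square, take expectations, and invoke Assumption~\ref{ass:weight}. Your remark that the pointwise bound holds for every $\theta$ is a useful clarification the paper leaves implicit, since it also covers the random iterate $\theta^r$ used in Theorem~\ref{thm:main}.
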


\begin{proof}
By definition,
$\nabla \widetilde{F}_r(\theta) - \nabla F_{*,r}(\theta)
= \sum_{i \in \mathcal{B}} (\bar{\alpha}_i^r - \bar{\alpha}_{i,*}^r)\,\nabla F_i(\theta)$.
Under Assumption~\ref{ass:smooth}, the triangle inequality implies that 
\begin{equation}
  \|\nabla \widetilde{F}_r(\theta) - \nabla F_{*,r}(\theta)\|
  \le G\,\|\bar{\alpha}^r - \bar{\alpha}_*^r\|_1.
\end{equation}
Squaring both sides and taking expectations, the result follows from Assumption~\ref{ass:weight}.
\end{proof}

We now state the main convergence result.

\begin{theorem}[Convergence of \textsc{FedFG}]
\label{thm:main}
Under Assumptions~\ref{ass:smooth}--\ref{ass:weight} and the condition in Lemma~\ref{lem:descent}, for any $R \ge 1$:
\begin{equation}
\begin{aligned}
  &\frac{1}{\sum_{r=0}^{R-1}\eta_r Q}
  \sum_{r=0}^{R-1}\eta_r Q\,
  \mathbb{E}\big[\|\nabla F_{*,r}(\theta^r)\|^2\big] \\
  &\le
  \underbrace{\frac{C_0}{\sum_{r=0}^{R-1}\eta_r Q}}_{\text{initialization}}
  +
  \underbrace{\frac{C_1\,(Q\sigma^2 + Q^2\zeta^2)
    \sum_{r=0}^{R-1}\eta_r^2}
    {\sum_{r=0}^{R-1}\eta_r Q}}_{\text{noise \& heterogeneity}} \\
  &\quad +
  \underbrace{\frac{C_2\,\Delta_{0:R}
    + C_3\,R\,\delta_f}
    {\sum_{r=0}^{R-1}\eta_r Q}}_{\text{drift \& verification failure}}
  +
  \underbrace{\frac{C_4}{S}}_{\text{probe estimation}},
\end{aligned}
\label{eq:conv-rate}
\end{equation}
where $C_0 = 2(\mathbb{E}[\widetilde{F}_0(\theta^0)] - \widetilde{F}^{\inf})/c_1$ with $\widetilde{F}^{\inf} := \inf_{r,\theta}\widetilde{F}_r(\theta)$, and $C_i > 0(i=1,...,4)$ depend on $L, G, G_{\max}, B, V, C_\alpha$ but not on $R$ or $S$.
\end{theorem}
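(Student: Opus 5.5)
The argument is a standard telescoping of the per-round descent, with extra terms for the verification-failure event, the drift of the weighted objective between rounds, and the probe-estimation error in the weights. For each round $r$ we split on $\mathcal{E}^r$ versus $\neg\mathcal{E}^r$.

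On $\mathcal{E}^r$ we invoke Lemma~\ref{lem:descent}. On $\neg\mathcal{E}^r$ we use only $L$-smoothness together with $\|\nabla \widetilde F_r\|\le G$ (Assumption~\ref{ass:smooth}) and the second-moment bound of Assumption~\ref{ass:verif}. This gives
\[
\mathbb{E}[\widetilde F_r(\theta^{r+1})\mid\neg\mathcal{E}^r]\le \widetilde F_r(\theta^r)+G\,G_{\max}+\tfrac{L}{2}G_{\max}^2 .
\]
Combining the two cases with $\mathbb{P}(\mathcal{E}^r)\ge 1-\delta_f$ yields
\[
\mathbb{E}\,\widetilde F_r(\theta^{r+1})\le \mathbb{E}\,\widetilde F_r(\theta^r)-(1-\delta_f)c_1\eta_rQ\,\mathbb{E}\|\nabla\widetilde F_r(\theta^r)\|^2+c_2\eta_r^2(Q\sigma^2+Q^2\zeta^2)+\delta_f(GG_{\max}+\tfrac L2G_{\max}^2).
\]
There is a subtlety here. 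Lemma~\ref{lem:descent} conditions on $\mathcal{E}^r$, so the gradient term is really $\mathbb{E}[\|\nabla\widetilde F_r\|^2\mathbf 1_{\mathcal{E}^r}]$. To turn this into an unconditional expectation we add back $\eta_rQ\,G^2\delta_f\le G^2\delta_f$, using $\eta_rQ\le1$. This extra cost is also $O(\delta_f)$ per round, so it is absorbed into $C_3 R\delta_f$, and the coefficient $c_1$ stays intact.

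Next we change objectives from round to round. We write $\widetilde F_r(\theta^{r+1})\ge \widetilde F_{r+1}(\theta^{r+1})-\sup_\theta|\widetilde F_{r+1}-\widetilde F_r|$ and use Assumption~\ref{ass:weight} to bound the expectation of that supremum by $\Delta_r$. Summing over $r=0,\dots,R-1$ then telescopes to
\[
\sum_r c_1\eta_rQ\,\mathbb{E}\|\nabla\widetilde F_r(\theta^r)\|^2\le \mathbb{E}\widetilde F_0(\theta^0)-\widetilde F^{\inf}+\Delta_{0:R}+c_2(Q\sigma^2+Q^2\zeta^2)\textstyle\sum_r\eta_r^2+C' R\delta_f .
\]

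To pass from $\widetilde F_r$ to $F_{*,r}$ we use $\|a\|^2\le 2\|b\|^2+2\|a-b\|^2$. This gives
\[
\mathbb{E}\|\nabla F_{*,r}(\theta^r)\|^2\le 2\,\mathbb{E}\|\nabla\widetilde F_r(\theta^r)\|^2+2G^2C_\alpha/S .
\]
The second term comes from Lemma~\ref{lem:alpha_perturb}. Since that lemma is stated for a fixed $\theta$ while $\theta^r$ is random, we reapply its proof pointwise: $\|\nabla\widetilde F_r(\theta)-\nabla F_{*,r}(\theta)\|\le G\|\bar\alpha^r-\bar\alpha^r_*\|_1$ holds uniformly in $\theta$, so the bound survives substitution of $\theta^r$. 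The factor $2$ explains the $2/c_1$ in $C_0$.

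Finally we multiply by $\eta_rQ$, sum, and divide by $\sum_r\eta_rQ$. The $S$-term normalizes exactly to $C_4/S$ with $C_4=2G^2C_\alpha$. The remaining terms give the stated form with $C_1=2c_2/c_1$, $C_2=2/c_1$, and $C_3=2C'/c_1$.

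The main obstacle is the verification-failure step. The adversarial update on $\neg\mathcal{E}^r$ is not controlled by Assumption~\ref{ass:local}, and the conditional-versus-unconditional bookkeeping must be done without weakening $c_1$. The first thing to try is the crude bound described above: use bounded gradients and $G_{\max}$ to charge every failed round a constant, and accept the linear-in-$R$ term $C_3R\delta_f$, which the statement explicitly permits.
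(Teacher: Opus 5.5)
Your proposal is correct and follows the paper's proof essentially step for step. You split on $\mathcal{E}^r$, invoke Lemma~\ref{lem:descent} on the good event, charge each failed round $(GG_{\max}+\tfrac{L}{2}G_{\max}^2)\delta_f$, bridge consecutive objectives with $\Delta_r$, telescope, remove the indicator at cost $c_1G^2\delta_f$ per round using $\eta_rQ\le 1$, and pass to $F_{*,r}$ via $\|a\|^2\le 2\|b\|^2+2\|a-b\|^2$, arriving at the same constants $C_1,\dots,C_4$. The differences are cosmetic: you remove the indicator before rather than after telescoping, and you explicitly justify evaluating Lemma~\ref{lem:alpha_perturb} at the random point $\theta^r$ via the $\theta$-uniform bound, which the paper leaves implicit. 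Your intermediate display with $(1-\delta_f)c_1$ multiplying the unconditional $\mathbb{E}\|\nabla\widetilde F_r(\theta^r)\|^2$ is not valid as written, though; simply replace it with the indicator form you adopt immediately afterward.
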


\begin{proof}
We decompose the expected function value decrease via the law of total expectation:
\begin{equation}
  \mathbb{E}\big[\widetilde{F}_r(\theta^{r+1})\big]
  = \mathbb{E}\big[\widetilde{F}_r(\theta^{r+1})\mathbf{1}_{\mathcal{E}^r}\big]
  + \mathbb{E}\big[\widetilde{F}_r(\theta^{r+1})\mathbf{1}_{\neg\mathcal{E}^r}\big].
\label{eq:conv-split}
\end{equation}

\textit{Correct verification ($\mathcal{E}^r$).}
Using Lemma~\ref{lem:descent} and the tower property,
\begin{equation}
\begin{aligned}
  &\mathbb{E}\big[\widetilde{F}_r(\theta^{r+1})\mathbf{1}_{\mathcal{E}^r}\big] \\
  &\le \mathbb{E}\big[\widetilde{F}_r(\theta^{r})\mathbf{1}_{\mathcal{E}^r}\big]
  - c_1\eta_r Q\,
    \mathbb{E}\big[\|\nabla \widetilde{F}_r(\theta^r)\|^2
    \mathbf{1}_{\mathcal{E}^r}\big] \\
  &\quad + c_2\,\eta_r^2\,(Q\sigma^2 + Q^2\zeta^2),
\end{aligned}
\label{eq:conv-good}
\end{equation}
where $\mathbb{P}(\mathcal{E}^r)\le 1$ is used to drop the indicator from the last term.

\textit{Verification failure ($\neg\mathcal{E}^r$).}
On $\neg\mathcal{E}^r$, by $L$-smoothness of $\widetilde{F}_r$, 
\begin{equation}
\widetilde{F}_r(\theta^{r+1})
\le
\widetilde{F}_r(\theta^r)+\langle \nabla \widetilde{F}_r(\theta^r),\Delta_{\mathcal{M}}^r\rangle+\frac{L}{2}\|\Delta_{\mathcal{M}}^r\|^2.
\end{equation}
Then by Cauchy--Schwarz and Assumption~\ref{ass:smooth},
\begin{equation}
\langle \nabla \widetilde{F}_r(\theta^r),\Delta_{\mathcal{M}}^r\rangle
\le
\|\nabla \widetilde{F}_r(\theta^r)\|\,\|\Delta_{\mathcal{M}}^r\|
\le
G\,\|\Delta_{\mathcal{M}}^r\|.
\end{equation}
Under Assumption~\ref{ass:verif}, we have
\begin{equation}
\begin{aligned}
  &\mathbb{E}\big[\widetilde{F}_r(\theta^{r+1})\mathbf{1}_{\neg\mathcal{E}^r}\big] \\
  &\le \mathbb{E}\big[\widetilde{F}_r(\theta^{r})\mathbf{1}_{\neg\mathcal{E}^r}\big] +\Big(GG_{\max}+\tfrac{L}{2}G_{\max}^2\Big)\mathbb{P}(\neg\mathcal{E}^t)\\
  &\le \mathbb{E}\big[\widetilde{F}_r(\theta^{r})\mathbf{1}_{\neg\mathcal{E}^r}\big]
  + \underbrace{\big(G\,G_{\max}
    + \tfrac{L}{2}G_{\max}^2\big)}_{=:\,C_f}\,\delta_f.
\end{aligned}
\label{eq:conv-bad}
\end{equation}

\textit{Per-round recursion.}
Adding~\eqref{eq:conv-good} and~\eqref{eq:conv-bad}, and bridging consecutive objectives via the drift bound in Assumption~\ref{ass:weight} (i.e., $\mathbb{E}[\widetilde{F}_{r+1}(\theta^{r+1})] \le \mathbb{E}[\widetilde{F}_r(\theta^{r+1})] + \Delta_r$), we obtain
\begin{equation}
\begin{aligned}
  &\mathbb{E}\big[\widetilde{F}_{r+1}(\theta^{r+1})\big] \\
  &\le \mathbb{E}\big[\widetilde{F}_r(\theta^{r})\big]
  - c_1\eta_r Q\,
    \mathbb{E}\big[\|\nabla \widetilde{F}_r(\theta^r)\|^2
    \mathbf{1}_{\mathcal{E}^r}\big] \\
  &\quad + c_2\eta_r^2(Q\sigma^2 + Q^2\zeta^2)
  + C_f\,\delta_f + \Delta_r.
\end{aligned}
\label{eq:conv-recursion}
\end{equation}

\textit{Telescoping.}
Summing~\eqref{eq:conv-recursion} over $r=0,\dots,R{-}1$ and using $\widetilde{F}_R(\theta^R)\ge \widetilde{F}^{\inf}$:
\begin{equation}
\begin{aligned}
  &c_1\sum_{r=0}^{R-1}\eta_r Q\,
    \mathbb{E}\big[\|\nabla \widetilde{F}_r(\theta^r)\|^2
    \mathbf{1}_{\mathcal{E}^r}\big] \\
  &\le \mathbb{E}[\widetilde{F}_0(\theta^0)] - \widetilde{F}^{\inf}
  + c_2(Q\sigma^2{+}Q^2\zeta^2)\!\sum_{r=0}^{R-1}\!\eta_r^2 \\
  &\quad + R\,C_f\,\delta_f + \Delta_{0:R}.
\end{aligned}
\label{eq:conv-tele}
\end{equation}

\textit{Removing the indicator.}
By Assumption~\ref{ass:smooth}, $\|\nabla \widetilde{F}_r(\theta^r)\|^2\mathbf{1}_{\neg\mathcal{E}^r}\le G^2\mathbf{1}_{\neg\mathcal{E}^r}$. Hence,
\begin{equation}
  \mathbb{E}\big[\|\nabla \widetilde{F}_r(\theta^r)\|^2
    \mathbf{1}_{\mathcal{E}^r}\big]
  \ge \mathbb{E}\big[\|\nabla \widetilde{F}_r(\theta^r)\|^2\big]
  - G^2\delta_f.
\label{eq:conv-remove-ind}
\end{equation}
Substituting~\eqref{eq:conv-remove-ind} into~\eqref{eq:conv-tele} and letting $C_f' := C_f + c_1 G^2$ absorb the additional term yields
\begin{equation}
\begin{aligned}
  &\sum_{r=0}^{R-1}\eta_r Q\,
    \mathbb{E}\big[\|\nabla \widetilde{F}_r(\theta^r)\|^2\big] \\
  &\le \frac{1}{c_1}\big(\mathbb{E}[\widetilde{F}_0(\theta^0)]
    {-} \widetilde{F}^{\inf}\big)
  + \frac{c_2}{c_1}(Q\sigma^2{+}Q^2\zeta^2)\!\sum_{r=0}^{R-1}\eta_r^2 \\
  &\quad + \frac{C_f'\,R\,\delta_f}{c_1}
  + \frac{\Delta_{0:R}}{c_1}.
\end{aligned}
\label{eq:conv-emp-bound}
\end{equation}

\textit{Connecting to the population objective.}
By the inequality $\|a\|^2 \le 2\|b\|^2 + 2\|a{-}b\|^2$ and Lemma~\ref{lem:alpha_perturb},
\begin{equation}
  \mathbb{E}\big[\|\nabla F_{*,r}(\theta^r)\|^2\big]
  \le 2\,\mathbb{E}\big[\|\nabla \widetilde{F}_r(\theta^r)\|^2\big]
  + \frac{2G^2 C_\alpha}{S}.
\label{eq:conv-pop-vs-emp}
\end{equation}
Multiplying~\eqref{eq:conv-pop-vs-emp} by $\eta_r Q$, summing over $r=0,\dots,R{-}1$, dividing by $\sum_{r=0}^{R-1}\eta_r Q$, and substituting~\eqref{eq:conv-emp-bound} completes the proof with $C_1 = 2c_2/c_1$, $C_2 = 2/c_1$, $C_3 = 2C_f'/c_1$, and $C_4 = 2G^2 C_\alpha$.
\end{proof}

\begin{remark}[Convergence rate interpretation]
\label{rem:rate}
The bound in~\eqref{eq:conv-rate} comprises five interpretable terms: (i) the initial optimality gap, decaying with the total effective step budget; (ii) irreducible noise from stochastic gradients and client heterogeneity; (iii) objective drift caused by evolving aggregation weights; (iv) a penalty for verification failure; and (v) weight estimation error from finite synthetic probes. With a constant step size $\eta_r = \eta = \Theta(1/\sqrt{QR})$, the first two terms yield the standard $\mathcal{O}(1/\sqrt{QR})$ rate for nonconvex federated optimization. The remaining terms vanish as the weights stabilize ($\Delta_{0:R}/\sqrt{QR} \to 0$), the verification mechanism succeeds with increasing probability ($\delta_f \to 0$), and the probe budget grows ($S \to \infty$), thereby recovering the canonical rate.
\end{remark}

\section{EXPERIMENTAL ANALYSIS}\label{sec4}

\begin{table*}[t]
  \centering
  \caption{The overall comparison of the compared methods
  (\cmark\ Supported \xmark\ Not Supported)}
  \label{tab:overall-method}
  \renewcommand{\arraystretch}{1.8}
  \setlength{\tabcolsep}{4pt}
  \begin{tabular}{c|c|c|c|c|c|c|c|c|c|c}
    \hline
    \textit{Method} & FedAVG & DP & FedCG & Median & TrimmedMean & Geometric & FoolsGold & GAN-Filter & DPR-PPFL & FedFG(Ours) \\
    \hline
    Privacy Preservable       & \xmark & \cmark & \cmark & \xmark & \xmark & \xmark & \xmark & \xmark & \cmark & \cmark \\
    Model Poisoning Resistible & \xmark & \xmark & \xmark & \cmark & \cmark & \cmark & \cmark & \cmark & \cmark & \cmark \\
    \hline
  \end{tabular}
\end{table*}

In this section, we evaluate the performance of FedFG in terms of privacy protection and robustness to data poisoning on three representative federated learning datasets.

\subsection{Experiment Settings}
\subsubsection{Datasets and Model}
We evaluate FedFG on three standard benchmark datasets: MNIST, FMNIST (Fashion-MNIST), and CIFAR10.

\begin{itemize}
\item \textbf{MNIST} \cite{lecun2002gradient} is a classic 10-class handwritten digit dataset (0--9) with 60{,}000 training and 10{,}000 test grayscale images of size \(28 \times 28\). In our experiments, images are resized to \(32 \times 32\) to unify the input resolution.
\item \textbf{FMNIST} \cite{xiao2017fashion} is a balanced 10-class fashion image dataset with 60{,}000 training and 10{,}000 test grayscale images of size \(28 \times 28\). We also resize images to \(32 \times 32\) to ensure a consistent model input.
\item \textbf{CIFAR10} \cite{krizhevsky2009learning} is a widely used 10-class object recognition benchmark containing 50{,}000 training and 10{,}000 test RGB images of size \(32 \times 32\), with a balanced class distribution.
\end{itemize}

To simulate realistic non-IID scenarios in FL, we partition the training data across clients using a Dirichlet distribution following prior work \cite{hsu2019measuring}. We set $\beta = 0.5$ to distribute data unevenly across clients, resulting in MNIST-0.5, FMNIST-0.5, and CIFAR10-0.5. We further construct MNIST-0.2 (i.e., $\beta = 0.2$) to model more extreme data heterogeneity among clients.

In our experiments, we adopt the canonical LeNet-5 CNN \cite{lecun2002gradient} as the backbone model. We treat the first two convolutional layers as the private extractor and the last three fully connected layers as the public classifier. Our flow-matching model architecture is adapted from TorchCFM \cite{tong2024improving,tong2023simulation}; specifically, we adjust the vector field dimensionality and step size to match the output dimensions of the extractor and the generator.

\subsubsection{Baseline Methods}

To enable a comprehensive evaluation of both privacy protection and resistance to data poisoning, we compare our approach with nine representative state-of-the-art baselines summarized in Table~\ref{tab:overall-method}. Among them, FedAVG \cite{mcmahan2017communication} serves as the canonical FL benchmark without any explicit defense. The remaining methods are grouped into three categories:
\begin{itemize}
    \item \textbf{Privacy-protective defenses}, including Differential Privacy (DP) \cite{wei2020federated}, which mitigates data leakage by injecting Gaussian noise into local models (two noise levels: $\sigma^{2}=0.1$ and $\sigma^{2}=0.001$), and FedCG \cite{ijcai2022-324}, which splits CNNs into a private feature extractor and a classifier and uses cGAN-based imitation for the latter part of the local model.
    \item \textbf{Poisoning-resistant defenses}, including coordinate-wise Median and TrimmedMean \cite{yin2018byzantine}, where the server takes the median or removes a fixed fraction of extreme values before averaging; Geometric aggregation \cite{pillutla2022robust}, which reduces sensitivity to outliers via geometric-type combination of updates; FoolsGold \cite{fang2020local}, which down-weights clients exhibiting high gradient similarity across rounds under the assumption of colluding attackers; and GAN-Filter \cite{zafar2025robust}, which employs a server-side cGAN to synthesize validation data for detecting and filtering poisoned updates without relying on external datasets.
    \item \textbf{Defenses with both capabilities}, represented by DPR-PPFL \cite{chen2024data}, which leverages representational similarity analysis to support asymmetric local models against data inversion while enabling the server to identify benign updates and robustly aggregate them under poisoning.
\end{itemize}

\subsubsection{Experimental Environment}
All experiments were conducted in Python and PyTorch on a server equipped with four NVIDIA GeForce RTX 4090 GPUs and Intel(R) Xeon(R) Platinum 8352V CPU@2.10GHz.

\begin{figure*}[!htbp] 
  \centering
  \includegraphics[width=0.9\textwidth]{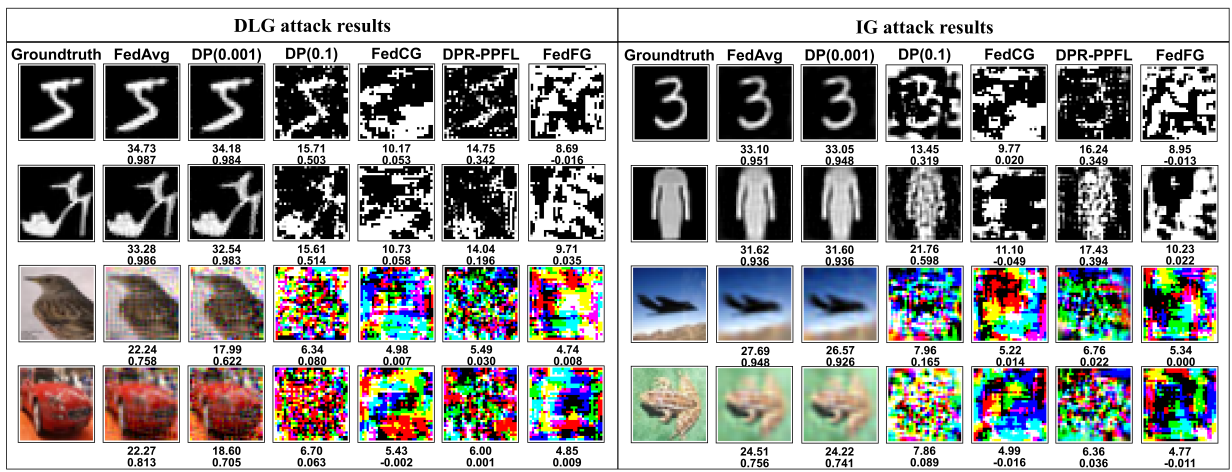}
  \caption{Comparison of ground truth images and recovered images under DLG and IG attacks. The PSNR (in dB) and SSIM values are listed sequentially below each image. The first row reports results on MNIST, the second row reports results on FMNIST, and the third and fourth rows report results on CIFAR10.}
  \label{Fig-privacy}
\end{figure*}

\subsection{Evaluation on Privacy Protection Ability}
FedFG provides privacy protection by updating the flow-matching generator $FG_i^{t}$ rather than directly updating the private extractor $E_i^{t}$. To evaluate its effectiveness, we conduct experiments against gradient-based privacy attacks as follows.

\subsubsection{Compared Methods}
We consider FedAVG \cite{mcmahan2017communication}, DP with different noise levels \cite{wei2020federated}, FedCG \cite{ijcai2022-324}, and DPR-PPFL \cite{chen2024data} as baselines.

\subsubsection{Attack Setup}
We employ two representative gradient inversion attacks to assess the effectiveness of our defense. Deep Leakage from Gradients (DLG) \cite{zhu2019deep} attempts to reconstruct clients' training samples from the model gradients shared with the server. Inverting Gradients (IG) \cite{geiping2020inverting} improves upon DLG by matching the observed and reconstructed gradients using a cosine-similarity-based objective, typically yielding higher-quality reconstructions. To strengthen the attacks in our evaluation, we add a total variation (TV) \cite{rudin1992nonlinear} regularization term to the loss function for both methods and apply them with customized hyperparameter settings. Specifically, all attacks use the Adam optimizer; DLG runs for 10{,}000 iterations and IG runs for 5{,}000 iterations.

\subsubsection{Experimental Results}
Fig.~\ref{Fig-privacy} reports both qualitative reconstructions and quantitative metrics (PSNR/SSIM; lower values indicate stronger privacy protection). The left part corresponds to DLG, and the right part corresponds to IG.
Without any defense, FedAvg exhibits the most severe leakage: reconstructed samples remain visually close to the ground truth, with consistently high PSNR and SSIM. The protection offered by DP strongly depends on the noise level. With a small noise scale ($\sigma^{2}{=}0.001$), reconstructions are still highly recognizable and the metric reduction is marginal, indicating that the perturbation is insufficient to suppress inversion. Increasing the noise to $\sigma^{2}{=}0.1$ substantially degrades reconstructed images and leads to a clear drop in PSNR and SSIM. Nevertheless, this protection is not uniform on grayscale datasets: for MNIST and FMNIST, DP ($\sigma^{2}{=}0.1$) mostly remains in the range of 15--21\,dB with SSIM above 0.5, suggesting that structural cues may still be exploitable, and faint contour details can be discerned in the recovered images. DPR-PPFL further disrupts the inversion process, leading to visibly corrupted, blob-like reconstructions with reduced metrics, although some cases still retain non-negligible PSNR and SSIM. In contrast, FedCG and FedFG consistently produce near-noise reconstructions with lower PSNR and SSIM.

Although neither FedCG nor FedFG transmits the extractor, the generator can still act as a prior over the feature distribution for gradient inversion attacks. In FedCG, the cGAN tends to learn a more concentrated conditional feature manifold, which strengthens the prior constraints and yields more discriminative structure for inversion. By contrast, FedFG learns a time-dependent vector field via flow matching and performs ODE-based sampling; the resulting generated distribution is typically smoother, making gradient matching more underconstrained and the reconstruction process less stable. Consequently, under both DLG and IG, FedFG achieves lower reconstruction-quality scores, indicating stronger privacy protection.

 \begin{figure}[htbp]
  \centering
  \subfigure[SF]{%
    \includegraphics[width=0.45\linewidth]{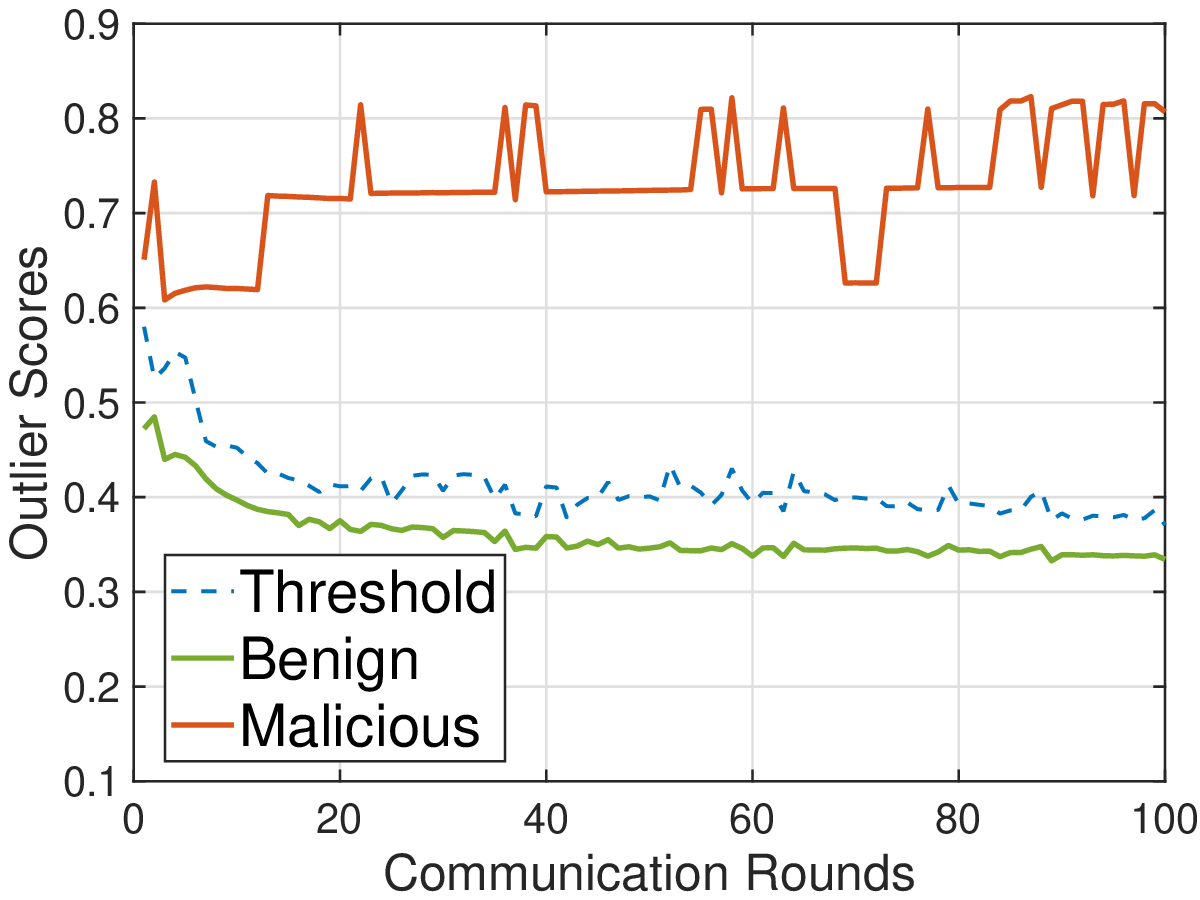}}
  \subfigure[IPM]{%
    \includegraphics[width=0.45\linewidth]{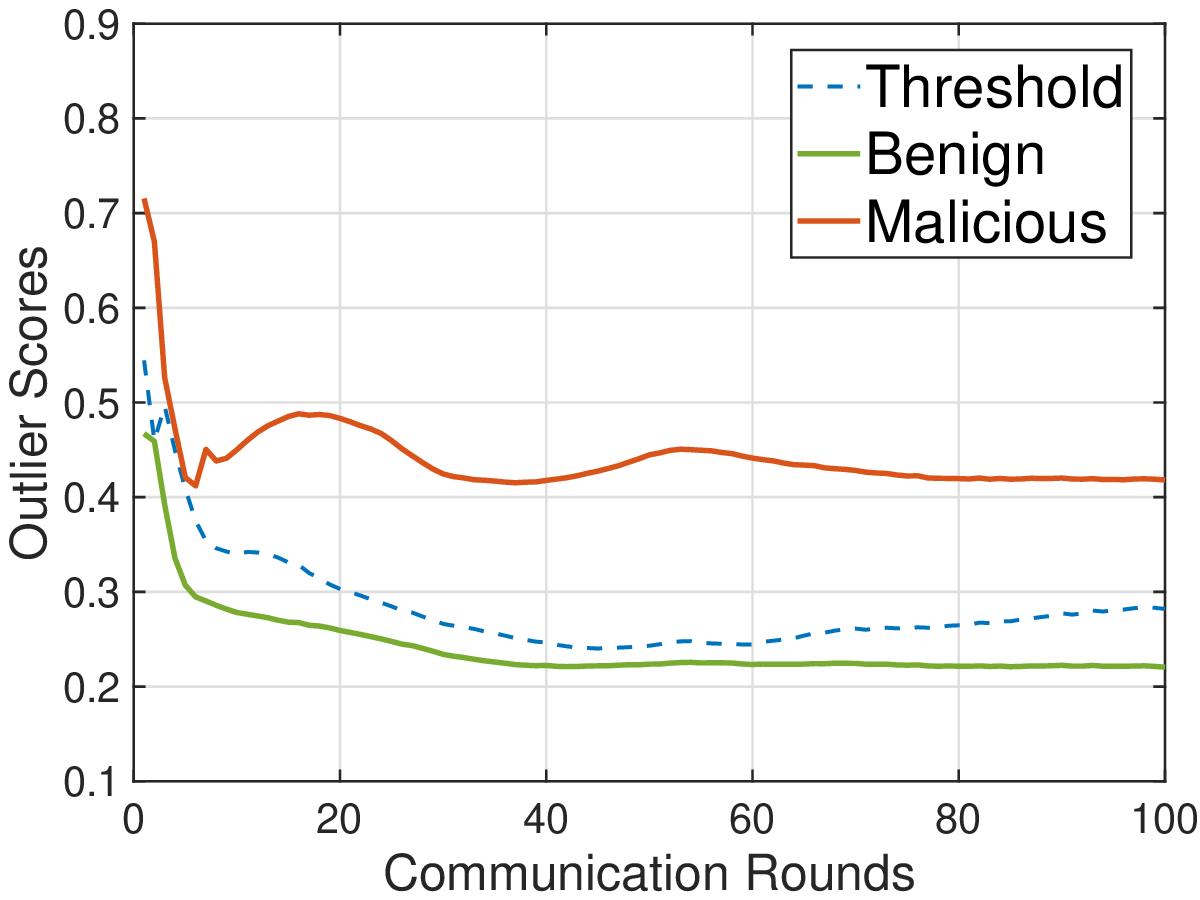}}
    \caption{The statistical results of the outlier scores for malicious and benign clients and dynamic thresholds under attacks vary with epochs on MNIST.}
  \label{Fig-score}
\end{figure}

\subsection{Evaluation on Model Poisoning Resistibility}

\subsubsection{Compared Methods}
We compare FedFG with six baselines that are designed to resist model poisoning: Median \cite{yin2018byzantine}, TrimmedMean \cite{yin2018byzantine}, Geometric \cite{pillutla2022robust}, FoolsGold \cite{fang2020local}, DPR-PPFL \cite{chen2024data}, and GAN-Filter \cite{zafar2025robust}.

\subsubsection{Attack Setup}
To assess robustness, we implement three representative poisoning attacks that cover both gradient-level and model-level adversarial behaviors, as summarized below:
\begin{itemize}
    \item \textbf{Sign Flipping (SF)} \cite{karimireddy2021learning}: Malicious clients reverse the sign of their local updates to push the aggregated global update in the opposite direction.
    \item \textbf{Inner Product Manipulation (IPM)} \cite{xie2020fall}: Malicious clients manipulate the direction of the aggregated update by making its inner product with the true gradient small or negative.
    \item \textbf{Model Poisoning Attack based on Fake clients (MPAF)} \cite{cao2022mpaf}: Fake clients steer global updates toward an attacker-chosen low-accuracy base model and scale their updates to maximally degrade overall performance.
\end{itemize}

In our implementation, the MAD-based cutoff multiplier $\gamma$ is set to $3$, corresponding to the classical three-scale cutoff, and the accuracy score threshold $\kappa$ is empirically set to $\frac{1}{2N}$, where $N$ denotes the number of clients. Following common FL evaluation practice, we instantiate an FL system consisting of 10 clients and one central server. Each client performs local training for 10 epochs using a batch size of 64 and a learning rate of 0.001. We further vary the fraction of malicious clients in $\{10\%,20\%,30\%\}$. All experiments run for 100 global communication rounds; unless otherwise specified, adversaries start behaving maliciously from round 20.

\begin{figure*}[htbp]
 \centering
 \subfigure[MNIST, SF]{\includegraphics[width=5cm]{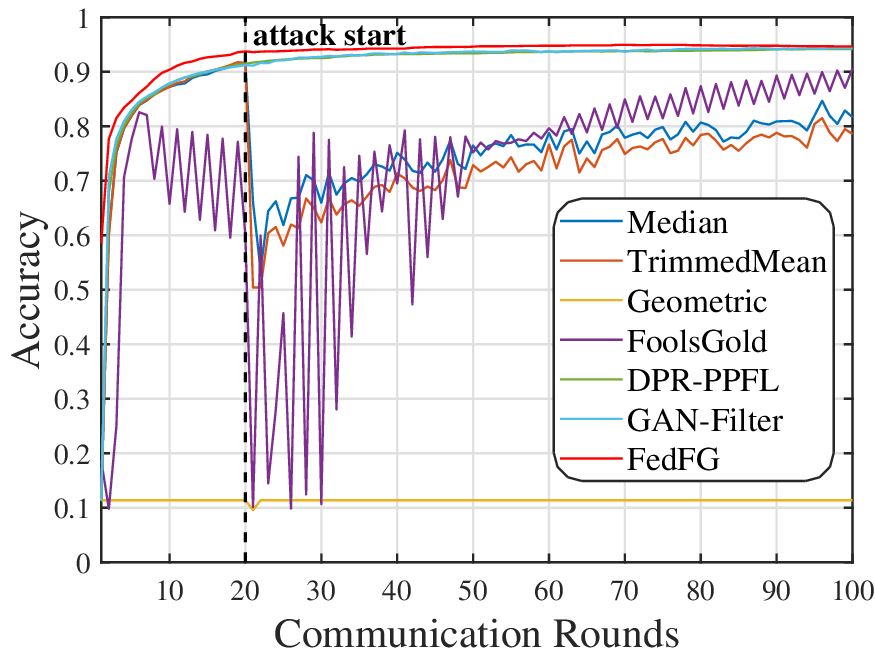}}
 \subfigure[MNIST, IPM]{\includegraphics[width=5cm]{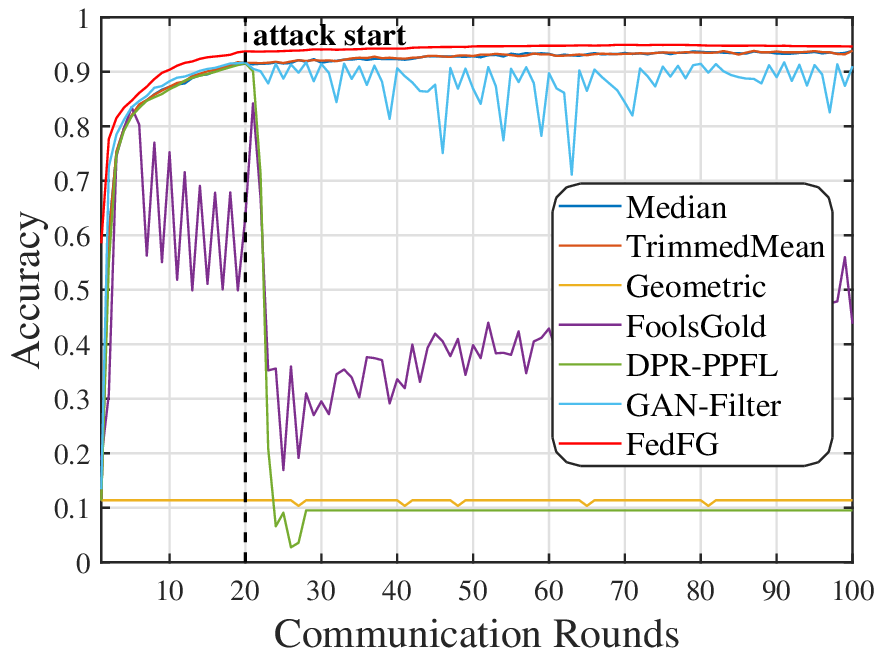}}
 \subfigure[MNIST, MPAF]{\includegraphics[width=5cm]{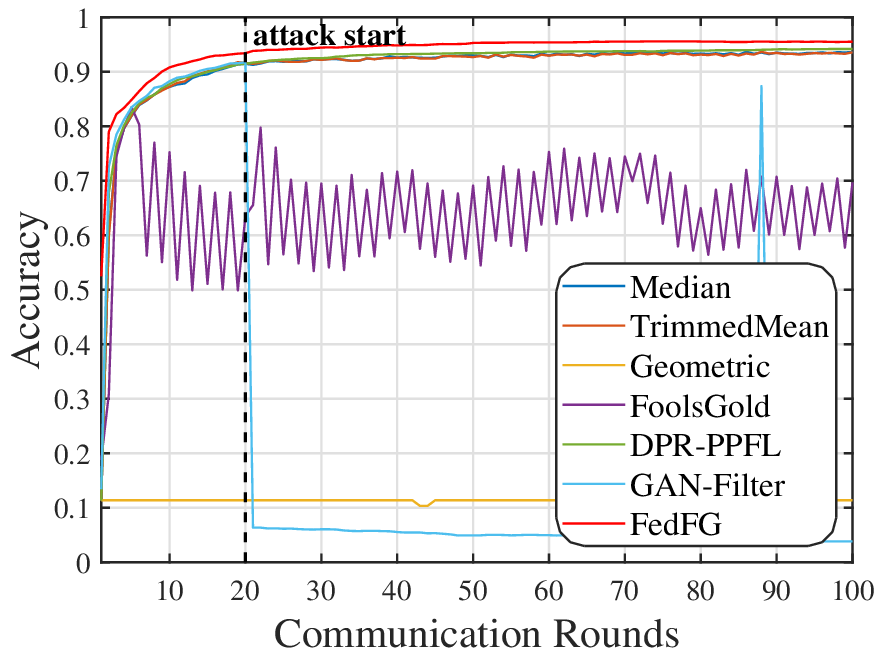}}
 \subfigure[MNIST-0.5, SF]{\includegraphics[width=5cm]{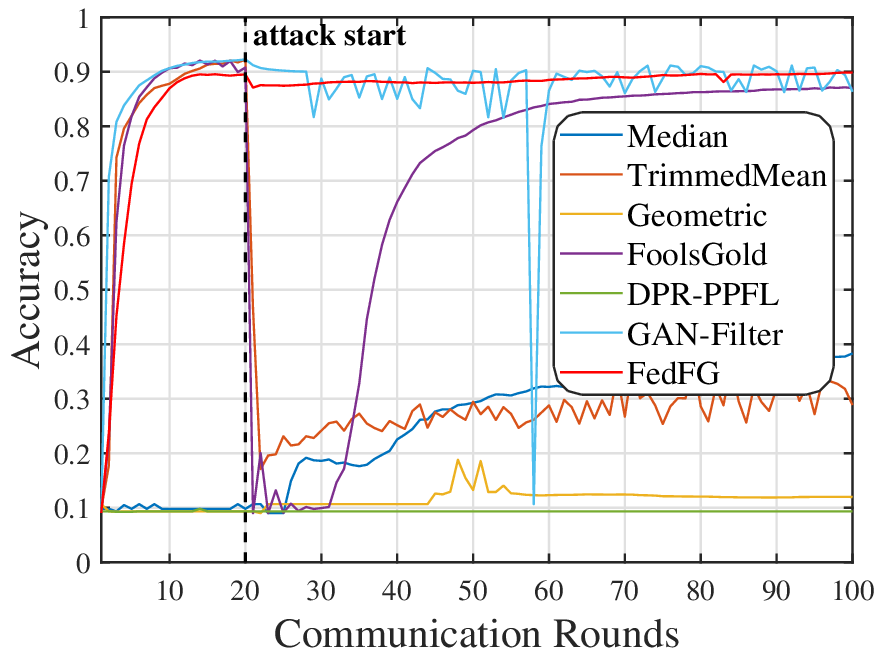}}
 \subfigure[MNIST-0.5, IPM]{\includegraphics[width=5cm]{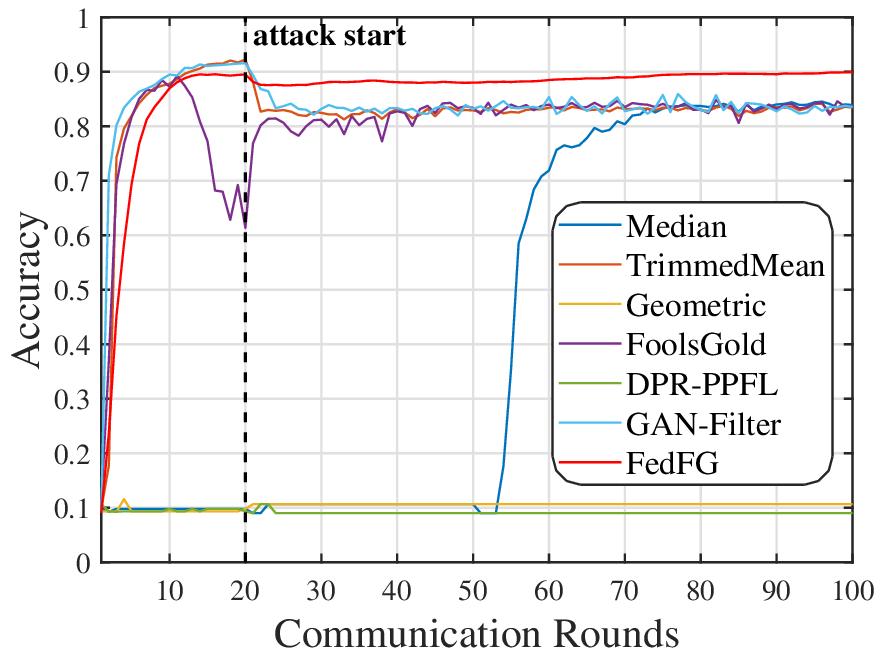}}
 \subfigure[MNIST-0.5, MPAF]{\includegraphics[width=5cm]{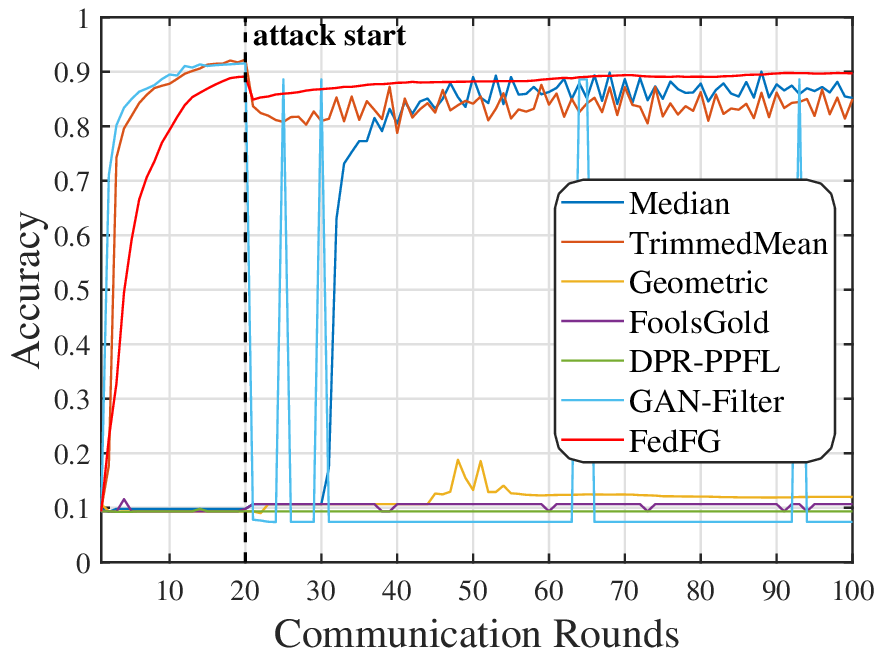}}
 \subfigure[MNIST-0.2, SF]{\includegraphics[width=5cm]{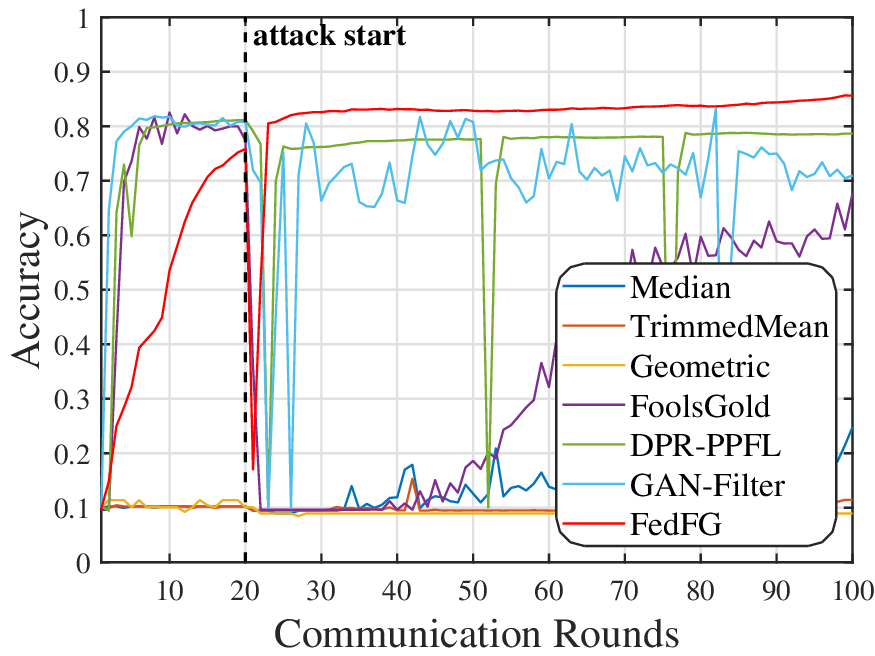}}
 \subfigure[MNIST-0.2, IPM]{\includegraphics[width=5cm]{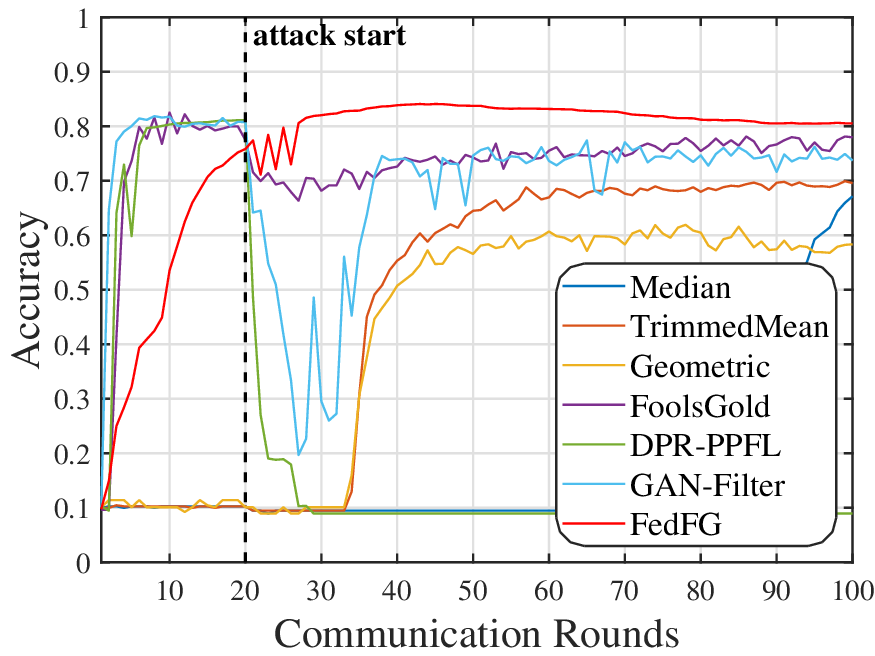}}
 \subfigure[MNIST-0.2, MPAF]{\includegraphics[width=5cm]{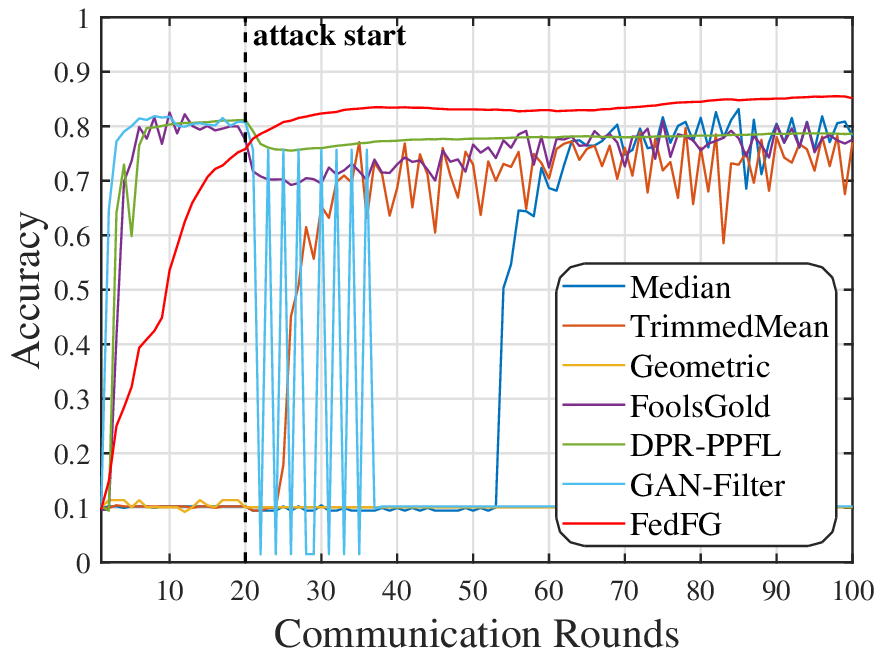}}
 \caption{Test-accuracy trajectories of FL defense methods under different distribution settings, with the attack starting at round 20.}
 \label{Fig-allpoints}
 \vspace{\baselineskip}
\end{figure*}

\subsubsection{Preliminary Verification}

Fig.~\ref{Fig-score} provides a preliminary verification of FedFG by analyzing the evolution of clients' outlier scores over 100 global communication rounds on MNIST under SF and IPM, where 30\% of participating clients are malicious. In this stress test, adversaries launch attacks from communication round 0 to evaluate the detection mechanism throughout the entire training trajectory.

As shown in Fig.~\ref{Fig-score}(a), under SF, malicious clients consistently exhibit substantially larger and more volatile outlier scores than benign clients across rounds. The MAD-based threshold adapts to the score statistics while remaining above the benign maximum and below the malicious minimum, thereby maintaining a clear separation margin. Under IPM in Fig.~\ref{Fig-score}(b), the gap between malicious and benign scores is comparatively narrower, reflecting the more subtle manipulation pattern of IPM; nevertheless, the threshold remains between the maximum benign score and the minimum malicious score over the entire training process. This persistent ordering indicates that the proposed score-and-threshold mechanism can reliably distinguish malicious updates from benign ones even when attacks start from round 0.

Overall, Fig.~\ref{Fig-score} shows that the MAD-driven dynamic threshold effectively tracks the evolving score distribution and continuously separates malicious clients from benign clients in both SF and IPM scenarios. This enables FedFG to detect and filter compromised clients on the fly, providing an initial empirical validation of its robustness and supporting its use as a robust aggregation component.

\begin{table*}[htbp]
\centering
\small
\setlength{\tabcolsep}{9pt}
\renewcommand{\arraystretch}{0.8}
\caption{Performance (ACC) under different attack rates $\epsilon$ on MNIST, FMNIST and CIFAR10.}
\begin{tabular}{llccccccccc}
\specialrule{0.08em}{0pt}{0pt}
\toprule
\toprule
& &
\multicolumn{3}{c}{$\epsilon = 10\%$} &
\multicolumn{3}{c}{$\epsilon = 20\%$} &
\multicolumn{3}{c}{$\epsilon = 30\%$} \\
\cmidrule(lr){3-5} \cmidrule(lr){6-8} \cmidrule(lr){9-11}
&
\multicolumn{1}{c}{\textbf{Methods}} &
\multicolumn{1}{c}{\textbf{SF}} &
\multicolumn{1}{c}{\textbf{IPM}} &
\multicolumn{1}{c}{\textbf{MPAF}} &
\multicolumn{1}{c}{\textbf{SF}} &
\multicolumn{1}{c}{\textbf{IPM}} &
\multicolumn{1}{c}{\textbf{MPAF}} &
\multicolumn{1}{c}{\textbf{SF}} &
\multicolumn{1}{c}{\textbf{IPM}} &
\multicolumn{1}{c}{\textbf{MPAF}} \\
\midrule

\multirow{8}{*}{\rotatebox[origin=c]{90}{MNIST}}
& FedAvg                  & 0.1742 & 0.8949 & 0.0964 & 0.0950 & 0.1050 & 0.0960 & 0.0831 & 0.0957 & 0.0954  \\
& Median                  & 0.9362 & 0.9422 & 0.9418 & 0.9013 & 0.9390 & 0.9408 & 0.8166 & 0.9383 & 0.9363  \\
& TrimmedMean             & 0.9118 & 0.9413 & 0.9418 & 0.8655 & 0.9390 & 0.9400 & 0.7857 & 0.9386 & 0.9357  \\
& Geometric               & 0.1104 & 0.1104 & 0.1104 & 0.1113 & 0.1113 & 0.1113 & 0.1137 & 0.1137 & 0.1137  \\
& FoolsGold               & 0.7073 & 0.5596 & 0.4802 & 0.8863 & 0.5230 & 0.5115 & 0.9060 & 0.4377 & 0.6969  \\
& DPR-PPFL                & 0.9416 & 0.9378 & 0.9416 & 0.9420 & 0.9085 & 0.9420 & 0.9417 & 0.0951 & 0.9417  \\
& GAN-Filter              & 0.9451 & 0.9400 & 0.0964 & 0.9435 & 0.9293 & 0.0960 & 0.9434 & 0.9100 & 0.0383  \\
\cmidrule(lr){2-11}
& FedFG (Ours)            & \textbf{0.9493} & \textbf{0.9496} & \textbf{0.9558} & \textbf{0.9490} & \textbf{0.9490} & \textbf{0.9558} & \textbf{0.9466} & \textbf{0.9463} & \textbf{0.9551} \\
\midrule
\multirow{8}{*}{\rotatebox[origin=c]{90}{FMNIST}}
& FedAvg                  & 0.3631 & 0.8122 & 0.0953 & 0.0988 & 0.1035 & 0.0963 & 0.0983 & 0.0983 & 0.0966  \\
& Median                  & 0.8113 & 0.8227 & 0.8249 & 0.7385 & 0.8238 & 0.8165 & 0.3106 & 0.8217 & 0.8146  \\
& TrimmedMean             & 0.7973 & 0.8273 & 0.8273 & 0.7320 & 0.8248 & 0.8200 & 0.6746 & 0.8180 & 0.8163  \\
& Geometric               & 0.7967 & 0.8253 & 0.8204 & 0.7055 & 0.8233 & 0.8128 & 0.5514 & 0.8226 & 0.8057  \\
& FoolsGold               & 0.2022 & 0.5924 & 0.4873 & 0.7733 & 0.5733 & 0.5945 & 0.7994 & 0.4834 & 0.5426  \\
& DPR-PPFL                & 0.8300 & 0.7862 & 0.8322 & 0.8273 & 0.8178 & 0.8303 & 0.8337 & 0.0983 & 0.8326  \\
& GAN-Filter              & 0.8311 & 0.8273 & 0.0953 & 0.8300 & 0.7803 & 0.0963 & 0.8286 & 0.7420 & 0.0966  \\
\cmidrule(lr){2-11}
& FedFG (Ours)            & \textbf{0.8429} & \textbf{0.8429} & \textbf{0.8449} & \textbf{0.8430} & \textbf{0.8430} & \textbf{0.8463} & \textbf{0.8417} & \textbf{0.8417} & \textbf{0.8423} \\
\midrule
\multirow{8}{*}{\rotatebox[origin=c]{90}{CIFAR10}}
& FedAvg                  & 0.0949 & 0.4473 & 0.1033 & 0.0953 & 0.1053 & 0.1053 & 0.0963 & 0.1034 & 0.1063  \\
& Median                  & 0.4893 & 0.5104 & 0.5084 & 0.4040 & 0.5240 & 0.4958 & 0.2840 & 0.5034 & 0.4866  \\
& TrimmedMean             & 0.4442 & 0.5096 & 0.5047 & 0.3993 & 0.5185 & 0.4980 & 0.2349 & 0.4989 & 0.4780  \\
& Geometric               & 0.4587 & 0.5176 & 0.5020 & 0.3073 & 0.5130 & 0.4853 & 0.1251 & 0.4883 & 0.4780  \\
& FoolsGold               & 0.1769 & 0.1929 & 0.1400 & 0.2820 & 0.2680 & 0.2148 & 0.3531 & 0.2894 & 0.2606  \\
& DPR-PPFL                & 0.5171 & 0.5136 & 0.5187 & 0.5070 & 0.5123 & 0.5218 & 0.1060 & 0.0960 & 0.5220  \\
& GAN-Filter              & 0.5147 & 0.5053 & 0.3484 & 0.5108 & 0.4785 & 0.1053 & 0.5146 & 0.3143 & 0.1063  \\
\cmidrule(lr){2-11}
& FedFG (Ours)            & \textbf{0.5611} & \textbf{0.5456} & \textbf{0.5327} & \textbf{0.5558} & \textbf{0.5478} & \textbf{0.5570} & \textbf{0.5234} & \textbf{0.5631} & \textbf{0.5571} \\
\bottomrule
\bottomrule
\specialrule{0.08em}{0pt}{0pt}
\end{tabular}
\label{tab:results1}
\end{table*}

\subsubsection{Experimental Results}

Table~\ref{tab:results1} summarizes the results on IID MNIST, FMNIST, and CIFAR10. FedFG achieves the highest accuracy across all attack types and malicious-client fractions. On MNIST, even when $\epsilon$ increases to 30\% under the strong MPAF attack, FedFG remains stable, achieving an accuracy of 0.946--0.956. Conventional robust aggregators, however, degrade to varying degrees. Median and TrimmedMean exhibit a clear accuracy drop under SF as $\epsilon$ increases. FoolsGold is sensitive to the attack type and exhibits instability. Geometric aggregation nearly collapses to random-guessing performance on MNIST (approximately 0.11) across all attacks, indicating high sensitivity under the considered threat model and training configuration. Similar trends are observed on FMNIST, where FedFG achieves an accuracy of approximately 0.842 across different attacks and values of $\epsilon$, outperforming both Median and TrimmedMean while remaining robust in settings where GAN-Filter and DPR-PPFL fail. On CIFAR10, FedFG maintains an accuracy of 0.52--0.56 across all attacks and malicious fractions, substantially exceeding the best-performing baseline.

Table~\ref{tab:results2} reports results under Dirichlet partitioning ($\beta=0.5$), which introduces client heterogeneity and further complicates robust aggregation. On MNIST-0.5, FedFG maintains high accuracy (approximately 0.90) across all attacks and values of $\epsilon$. Several baselines, however, become fragile. Median collapses to 0.1002 under IPM with $\epsilon=10\%$ and drops to 0.1051 under MPAF with $\epsilon=10\%$, while Geometric aggregation is likewise unstable. DPR-PPFL and GAN-Filter perform well under specific configurations (for example, on MNIST-0.5 with $\epsilon=20\%$, DPR-PPFL still reaches 0.9628 under MPAF, and GAN-Filter exceeds 0.91 under SF and IPM) but can fail severely under others. When $\epsilon=30\%$, for instance, DPR-PPFL collapses to near-random performance, and GAN-Filter fails under nearly all MPAF settings. These failures point to practical limitations of each method. Because DPR-PPFL relies on similarity-based filtering, its reliability can degrade under non-IID heterogeneity. GAN-Filter, on the other hand, depends on an unpoisoned global model to guide the server-side generator in producing trustworthy samples and thus can be undermined by multi-round, consistent attacks such as MPAF.

On FMNIST-0.5, FedFG achieves an accuracy of 0.760--0.781 across all attacks and values of $\epsilon$, again outperforming all other methods. Under SF with larger $\epsilon$, many defenses degrade rapidly, whereas FedFG remains substantially more stable. On CIFAR10-0.5, FedFG attains the best results in all settings. Even with $\epsilon=30\%$, FedFG remains above 0.51 under all three attacks, while other defenses are often below 0.45 or fluctuate markedly. Taken together, these results confirm that FedFG handles complex visual data, strong client heterogeneity, and adversarial poisoning more reliably than existing methods.

Fig.~\ref{Fig-allpoints} shows the test-accuracy trajectories over 100 rounds under three distribution settings: IID MNIST, non-IID MNIST-0.5, and extremely non-IID MNIST-0.2. The vertical line at round 20 marks the onset of the attack.

For IID MNIST (Fig.~\ref{Fig-allpoints}(a)--(c)), coordinate-wise robust methods are generally effective under IPM and MPAF but become unstable under SF as the adversarial contribution increases, consistent with the results in Table~\ref{tab:results1}. FoolsGold exhibits large fluctuations and is strongly affected by attacks. FedFG, in contrast, remains among the top-performing methods throughout training, suggesting that its filtering and reweighting mechanisms suppress poisoned updates while preserving useful learning signals. For MNIST-0.5 (Fig.~\ref{Fig-allpoints}(d)--(f)), client heterogeneity amplifies the damage caused by attacks. Multiple baselines drop noticeably after round 20 and recover slowly, suggesting that even a small number of poisoned updates can have a disproportionate impact under heterogeneous conditions. FedFG maintains a smooth, high-accuracy trajectory, indicating that the similarity-based filtering remains effective under Dirichlet partitioning despite the naturally increased dispersion among benign clients. For the extreme MNIST-0.2 setting (Fig.~\ref{Fig-allpoints}(g)--(i)), the task becomes even more challenging, as defenses that rely on simplistic assumptions are more likely to misclassify benign clients or fail during aggregation. Several baselines collapse to low accuracy or remain unstable for extended periods, whereas FedFG sustains high accuracy with relatively small fluctuations. These results suggest that our distribution-distance score remains discriminative even under severe heterogeneity, allowing FedFG to maintain stability across diverse client distributions and adversarial conditions.

\begin{table*}[htbp]
\centering
\small
\setlength{\tabcolsep}{9pt}
\renewcommand{\arraystretch}{0.8}
\caption{Performance (ACC) under different attack rates $\epsilon$ on MNIST-0.5, FMNIST-0.5 and CIFAR10-0.5.}
\begin{tabular}{llccccccccc}
\specialrule{0.08em}{0pt}{0pt}
\toprule
\toprule
& &
\multicolumn{3}{c}{$\epsilon = 10\%$} &
\multicolumn{3}{c}{$\epsilon = 20\%$} &
\multicolumn{3}{c}{$\epsilon = 30\%$} \\
\cmidrule(lr){3-5} \cmidrule(lr){6-8} \cmidrule(lr){9-11}
&
\multicolumn{1}{c}{\textbf{Methods}} &
\multicolumn{1}{c}{\textbf{SF}} &
\multicolumn{1}{c}{\textbf{IPM}} &
\multicolumn{1}{c}{\textbf{MPAF}} &
\multicolumn{1}{c}{\textbf{SF}} &
\multicolumn{1}{c}{\textbf{IPM}} &
\multicolumn{1}{c}{\textbf{MPAF}} &
\multicolumn{1}{c}{\textbf{SF}} &
\multicolumn{1}{c}{\textbf{IPM}} &
\multicolumn{1}{c}{\textbf{MPAF}} \\
\midrule

\multirow{8}{*}{\rotatebox[origin=c]{90}{MNIST-0.5}}
& FedAvg                  & 0.0922 & 0.3469 & 0.1022 & 0.0920 & 0.1063 & 0.1020 & 0.0900 & 0.0900 & 0.1049  \\
& Median                  & 0.9089 & 0.1002 & 0.1051 & 0.6008 & 0.8758 & 0.1063 & 0.3831 & 0.8389 & 0.8520  \\
& TrimmedMean             & 0.8447 & 0.9098 & 0.9182 & 0.5683 & 0.8258 & 0.8703 & 0.2886 & 0.8343 & 0.8497  \\
& Geometric               & 0.1051 & 0.0920 & 0.0920 & 0.4540 & 0.1063 & 0.1063 & 0.1197 & 0.1066 & 0.1066  \\
& FoolsGold               & 0.9151 & 0.2040 & 0.7380 & 0.8923 & 0.8128 & 0.8993 & 0.8700 & 0.8366 & 0.8600  \\
& DPR-PPFL                & 0.9211 & 0.3907 & \textbf{0.9271} & 0.0935 & 0.8030 & \textbf{0.9628} & 0.0934 & 0.0900 & 0.0934  \\
& GAN-Filter              & 0.9169 & \textbf{0.9242} & 0.1022 & \textbf{0.9175} & \textbf{0.9190} & 0.1020 & 0.8634 & 0.8354 & 0.0740  \\
\cmidrule(lr){2-11}
& FedFG (Ours)            & \textbf{0.9242} & 0.9240 & 0.9238 & 0.9023 & 0.9045 & 0.9065 & \textbf{0.8983} & \textbf{0.8991} & \textbf{0.8971} \\
\midrule
\multirow{8}{*}{\rotatebox[origin=c]{90}{FMNIST-0.5}}
& FedAvg                  & 0.6453 & 0.7498 & 0.1022 & 0.0998 & 0.2303 & 0.0998 & 0.1011 & 0.1011 & 0.1011  \\
& Median                  & 0.7080 & \textbf{0.7776} & 0.7464 & 0.6073 & 0.7530 & 0.7410 & 0.6529 & 0.6829 & 0.6577  \\
& TrimmedMean             & 0.6318 & 0.7609 & 0.7609 & 0.4625 & 0.7285 & 0.7300 & 0.2780 & 0.6731 & 0.7023  \\
& Geometric               & 0.7396 & 0.7540 & 0.7540 & 0.5885 & 0.7508 & 0.7398 & 0.2331 & 0.7151 & 0.7151  \\
& FoolsGold               & 0.5664 & 0.4960 & 0.6369 & 0.4960 & 0.3130 & 0.6395 & 0.7083 & 0.7214 & 0.7249  \\
& DPR-PPFL                & 0.7516 & 0.7413 & 0.7520 & 0.7450 & 0.7413 & 0.7473 & 0.7377 & 0.3549 & 0.7354  \\
& GAN-Filter              & 0.7518 & 0.7567 & 0.1022 & 0.7095 & 0.7433 & 0.0998 & 0.7274 & 0.7463 & 0.1011  \\
\cmidrule(lr){2-11}
& FedFG (Ours)            & \textbf{0.7620} & 0.7620 & \textbf{0.7780} & \textbf{0.7603} & \textbf{0.7608} & \textbf{0.7675} & \textbf{0.7731} & \textbf{0.7731} & \textbf{0.7806} \\
\midrule
\multirow{8}{*}{\rotatebox[origin=c]{90}{CIFAR10-0.5}}
& FedAvg                  & 0.1124 & 0.3542 & 0.0967 & 0.0965 & 0.2978 & 0.0963 & 0.0951 & 0.1037 & 0.0971  \\
& Median                  & 0.4318 & 0.4696 & 0.4798 & 0.3298 & 0.4680 & 0.4690 & 0.1803 & 0.4006 & 0.3991  \\
& TrimmedMean             & 0.3593 & 0.3971 & 0.4416 & 0.2173 & 0.4295 & 0.4275 & 0.1174 & 0.3683 & 0.4326  \\
& Geometric               & 0.4233 & 0.4627 & 0.4671 & 0.3405 & 0.4680 & 0.4570 & 0.1506 & 0.4146 & 0.4220  \\
& FoolsGold               & 0.2818 & 0.2136 & 0.2556 & 0.2233 & 0.2650 & 0.4535 & 0.4149 & 0.4177 & 0.4166  \\
& DPR-PPFL                & 0.4809 & 0.2571 & 0.4858 & 0.4738 & 0.2688 & 0.4913 & 0.4486 & 0.2731 & 0.4440  \\
& GAN-Filter              & 0.4442 & 0.4138 & 0.0791 & 0.4585 & 0.4913 & 0.0753 & 0.4406 & 0.4363 & 0.0971  \\
\cmidrule(lr){2-11}
& FedFG (Ours)            & \textbf{0.5080} & \textbf{0.4827} & \textbf{0.5042} & \textbf{0.5263} & \textbf{0.5143} & \textbf{0.5188} & \textbf{0.5163} & \textbf{0.5154} & \textbf{0.5223} \\
\bottomrule
\bottomrule
\specialrule{0.08em}{0pt}{0pt}
\end{tabular}
\label{tab:results2}
\end{table*}

\begin{figure*}[htbp]
 \centering
 \subfigure[FMNIST, SF]{\includegraphics[height=3cm,width=4cm]{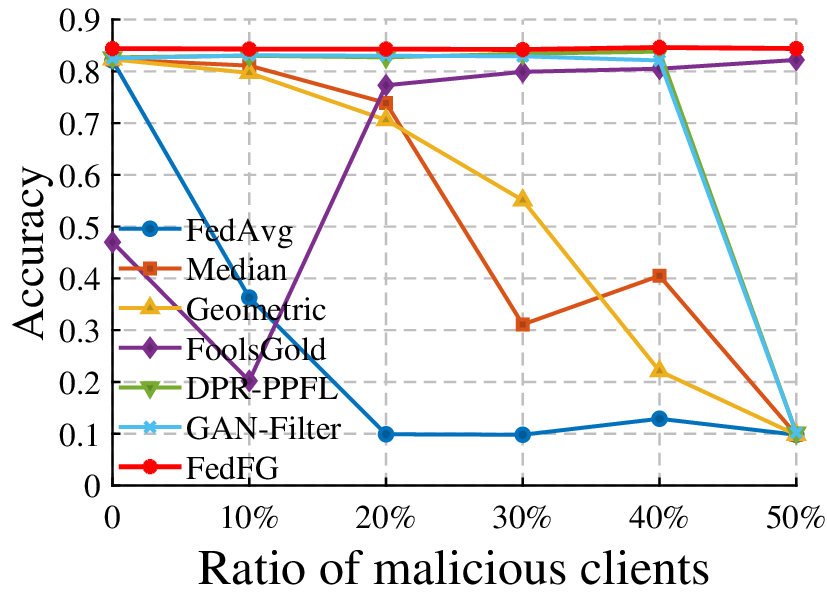}}
 \subfigure[FMNIST, IPM]{\includegraphics[height=3cm,width=4cm]{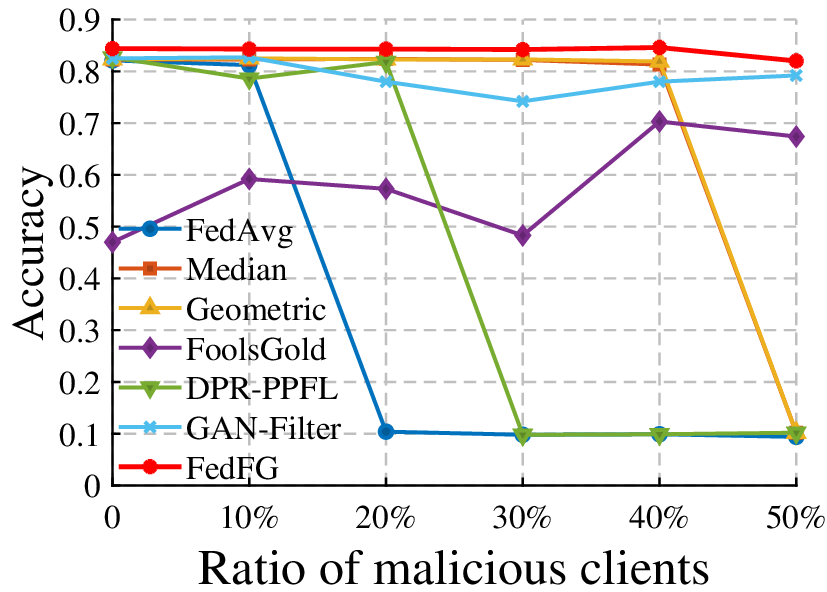}}
 \subfigure[FMNIST-0.5, SF]{\includegraphics[height=3cm,width=4cm]{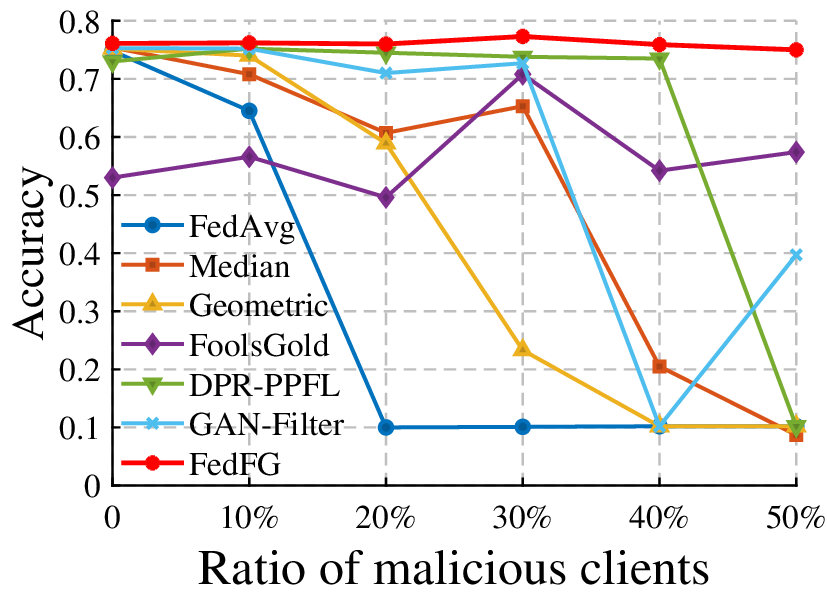}}
 \subfigure[FMNIST-0.5, IPM]{\includegraphics[height=3cm,width=4cm]{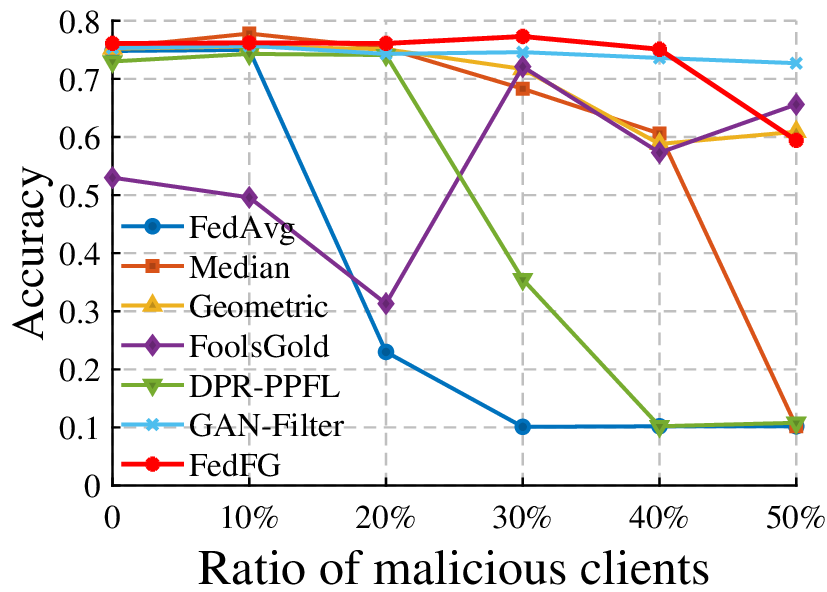}}
 \caption{Test accuracy of FL methods on FMNIST and FMNIST-0.5 under SF and IPM as the proportion of malicious clients increases.}
 \label{Fig-ratio}
 \vspace{\baselineskip}
\end{figure*}

\subsubsection{Further Robustness Analysis}

In this subsection, we vary two factors to stress-test FedFG under more demanding conditions: the proportion of malicious clients and the timing of the attack onset.

\paragraph{Impact of malicious-client fraction.}
Fig.~\ref{Fig-ratio} reports accuracy on FMNIST and FMNIST-0.5 under SF and IPM as $\epsilon$ increases from 0\% to 50\%. Most baselines degrade rapidly as $\epsilon$ grows, and some collapse to near-random performance at moderate-to-high malicious fractions. Coordinate-wise robust aggregators and similarity-based methods are particularly vulnerable, as they can fail abruptly when attackers dominate the aggregation statistics or when too few benign clients remain to form a stable consensus. FedFG, in contrast, exhibits little to no accuracy loss and maintains high accuracy across the entire range of $\epsilon$. Rather than relying on fixed trimming thresholds or brittle clustering assumptions, FedFG combines a generator-driven accuracy score with distribution-similarity detection to identify inconsistent behaviors. These two signals complement each other and together preserve robustness as $\epsilon$ increases.

\paragraph{Robustness to attacks starting at round~0.}
In the default setting, attacks begin at round~20, allowing the system to establish a benign warm-up trajectory. We further consider a stronger adversary in which poisoning begins at round~0, thereby preventing any such initialization. Fig.~\ref{Fig-bar} compares representative baselines (Median, DPR-PPFL, and GAN-Filter) with FedFG on MNIST and MNIST-0.5 under SF and IPM. When poisoning is present from the outset, several baselines suffer severe accuracy collapse, especially on heterogeneous MNIST-0.5, where early optimization dynamics are more fragile. GAN-Filter is particularly affected because it depends on an unpoisoned global model to guide the server-side generator and thus cannot cope with attacks in the initial rounds. In our experiments, FedFG fails in only one case (MNIST-0.5, SF, $\epsilon=30\%$) and maintains high accuracy in all remaining scenarios. This is consistent with the preliminary validation in Fig.~\ref{Fig-score}, where the MAD-based dynamic threshold effectively separates malicious clients from benign ones throughout training, even when attacks begin in the first communication rounds. FedFG therefore serves not only as a post-attack defense but also as a viable training strategy when poisoning is present from the start.

\begin{figure*}[!htbp]
 \centering
   \includegraphics[width=0.95\textwidth]{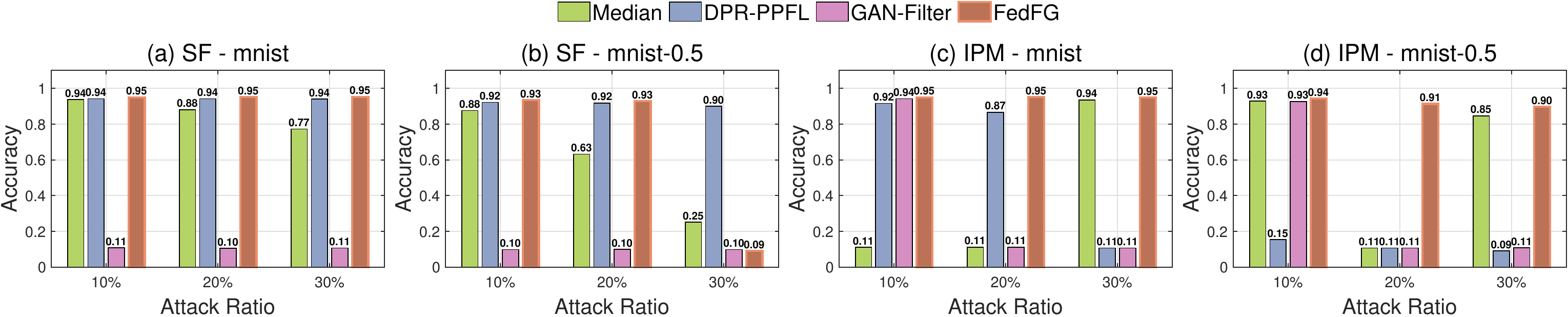}
   \caption{Accuracy comparison of Median, DPR-PPFL, GAN-Filter, and FedFG on MNIST and MNIST-0.5 under SF and IPM when the attack starts at round 0.}
   \label{Fig-bar}
\end{figure*}

Overall, the robustness of FedFG rests on two complementary signals, both computed on synthetic features produced by the global generator. The first is a probability-distribution distance that measures how each client's predictive distribution deviates from the majority; a MAD-driven adaptive threshold then identifies and filters anomalous clients. The second is an accuracy score that evaluates each client classifier on label-conditioned synthetic features and down-weights unreliable clients accordingly. This design tolerates both gradient-level and model-level attacks while preserving split-learning-style privacy, as the private feature extractor remains local. Across multiple datasets and heterogeneity settings, FedFG outperforms representative robust aggregation and privacy-preserving baselines, providing a federated optimization framework that combines privacy preservation with strong resistance to poisoning.

\section{CONCLUSION}\label{sec5}
In this paper, we propose FedFG, a robust federated learning framework based on flow-matching generation that jointly preserves client privacy and defends against sophisticated poisoning attacks. FedFG decouples each client model into a private extractor and a public classifier, and uses a flow-matching generator to replace the extractor during server communication, thereby reducing privacy leakage from shared updates. On the server side, FedFG verifies client updates using synthetic feature probes and performs robust aggregation via outlier filtering and accuracy-aware reweighting. Extensive experiments under both IID and non-IID settings show that FedFG consistently outperforms state-of-the-art defenses under multiple Byzantine attack strategies. Despite these promising results, FedFG introduces additional computation and communication overhead due to generator training and probe-based verification, which may be challenging for large-scale client populations. Future work will focus on improving efficiency for large-scale deployments and extending the framework to more adaptive threats and more complex real-world scenarios.


\bibliographystyle{IEEEtran} 
\bibliography{refs}


\end{document}